\title[The Siegel Upper Half Space is a Marsden--Weinstein Quotient]{The Siegel Upper Half Space is a Marsden--Weinstein Quotient: Symplectic Reduction and Gaussian Wave Packets}
\author{Tomoki Ohsawa}
\address{Department of Mathematical Sciences, The University of Texas at Dallas, 800 W Campbell Rd, Richardson, TX 75080-3021}
\email{tomoki@utdallas.edu}
\date{\today}
\keywords{Siegel upper half space, momentum maps, symplectic reduction, Hamiltonian dynamics, semiclassical mechanics, Gaussian wave packet}
\subjclass[2010]{37J15, 53D20, 70G45, 81Q05, 81Q20, 81Q70, 81S10}
\theoremstyle{plain}
\newtheorem{theorem}{Theorem}[section]
\newtheorem{proposition}[theorem]{Proposition}
\theoremstyle{definition}
\theoremstyle{remark}
\newtheorem{remark}[theorem]{Remark}
\def\od#1#2{\dfrac{d#1}{d#2}}
\def\pd#1#2{\dfrac{\partial #1}{\partial #2}}
\def\parentheses#1{{\left(#1\right)}}
\def\brackets#1{{\left[#1\right]}}
\def\braces#1{{\left\{#1\right\}}}
\def\tr{\mathop{\mathrm{tr}}\nolimits}
\def\pr{\mathop{\mathrm{pr}}\nolimits}
\def\norm#1{{\left\|#1\right\|}}
\def\DS{\displaystyle}
\def\R{\mathbb{R}}
\def\C{\mathbb{C}}
\def\defeq{\mathrel{\mathop:}=}
\def\eqdef{=\mathrel{\mathop:}}
\def\setdef#1#2{{\left\{ #1 \ |\ #2 \right\}}}
\def\ip#1#2{{\left\langle#1,#2\right\rangle}}
\newcommand{\id}{\operatorname{id}}
\renewcommand{\Re}{\operatorname{Re}}
\renewcommand{\Im}{\operatorname{Im}}
\def\eps{\varepsilon}
\def\Mat{\mathsf{M}}
\def\SO{\mathsf{SO}}
\def\Sp{\mathsf{Sp}}
\def\Orth{\mathsf{O}}
\def\U{\mathsf{U}}
\def\so{\mathfrak{so}}
\def\orth{\mathfrak{o}}
\newenvironment{tbmatrix}{\left[\begin{smallmatrix}}{\end{smallmatrix}\right]}
\def\d{{\bf d}}
\def\ins#1{{\bf i}_{#1}}
\newcommand\Ad{\operatorname{Ad}}
\begin{document}

\footskip=.6in

\begin{abstract}
  We show that the Siegel upper half space $\Sigma_{d}$ is identified with the Marsden--Weinstein quotient obtained by symplectic reduction of the cotangent bundle $T^{*}\mathbb{R}^{2d^{2}}$ with $\mathsf{O}(2d)$-symmetry.
  The reduced symplectic form on $\Sigma_{d}$ corresponding to the standard symplectic form on $T^{*}\mathbb{R}^{2d^{2}}$ turns out to be a constant multiple of the symplectic form on $\Sigma_{d}$ obtained by Siegel.
  Our motivation is to understand the geometry behind two different formulations of the Gaussian wave packet dynamics commonly used in semiclassical mechanics.
  Specifically, we show that the two formulations are related via the symplectic reduction.
\end{abstract}

\maketitle

\section{Introduction}
\subsection{The Siegel Upper Half Space}
The set $\Sigma_{d}$ of symmetric $d \times d$ complex matrices (symmetric in the real sense) with positive-definite imaginary parts, i.e.,
\begin{equation*}
  \Sigma_{d} \defeq 
  \setdef{ \mathcal{A} + {\rm i}\mathcal{B} \in \Mat_{d}(\mathbb{C}) }{ \mathcal{A}^{T} = \mathcal{A},\, \mathcal{B}^{T} = \mathcal{B},\, \mathcal{B} > 0 },
\end{equation*}
is called the {\em Siegel upper half space}.
For $d = 1$, one easily sees that $\Sigma_{1}$ is the upper half plane $\mathbb{H} \defeq \setdef{ a + {\rm i}b \in \C }{ b > 0 }$ of the complex plane $\C$, and hence one may think of $\Sigma_{d}$ as a generalization of the upper half plane $\mathbb{H}$ to higher dimensions.
It is well known that the upper half plane $\mathbb{H}$ may be identified with the homogeneous space ${\sf Sp}(2,\R)/\U(1)$ via the linear fractional or M\"obius transformation $ z \mapsto (a z + b)(c z + d)^{-1}$.
The Siegel upper half space $\Sigma_{d}$ is a natural generalization of $\mathbb{H}$ to higher dimensions in the sense that $\Sigma_{d}$ is identified with the homogeneous space $\Sp(2d,\R)/\U(d)$ via a generalized linear fractional transformation; see Section~\ref{sec:Sigma_d} for details.

\subsection{Motivation: Gaussian Wave Packet Dynamics}
Our motivation for studying the geometry of the Siegel upper half space $\Sigma_{d}$ is to better understand the underlying geometry for the dynamics of the Gaussian wave packet
\begin{equation}
  \label{eq:chi}
  \chi(x) = \exp\braces{ \frac{{\rm i}}{\hbar}\brackets{ \frac{1}{2}(x - q)^{T}(\mathcal{A} + {\rm i}\mathcal{B})(x - q) + p \cdot (x - q) + (\phi + {\rm i}\delta) } },
\end{equation}
which is widely used in the study of the semiclassical limit of the Schr\"odinger equation.
It is parametrized by $(q,p) \in T^{*}\R^{d}$, $\mathcal{A} + {\rm i}\mathcal{B} \in \Sigma_{d}$, $\phi \in \mathbb{S}^{1}$, and $\delta \in \R$, and it is well known (see \citet{He1975a,He1976b} and \citet{Ha1980, Ha1998}) that \eqref{eq:chi} is an {\em exact} solution of the time-dependent Schr\"odinger equation with quadratic potentials if these parameters, as functions of the time, satisfy a certain set of ODEs.

The geometry of $\Sigma_{d}$---particularly the fact that $\Sigma_{d}$ is a symplectic manifold---becomes important when one tries to understand the set of ODEs as a Hamiltonian system on a symplectic manifold; see \citet{OhLe2013} and \citet{Oh2015b}.

There are, however, two different ways of describing the dynamics.
In the formulation originally due to \citet{He1975a,He1976b}, elements in $\Sigma_{d}$ are written as is, i.e., one writes down ODEs for $\mathcal{A}$ and $\mathcal{B}$, whereas \citet{Ha1980, Ha1998} replaces $\mathcal{A} + {\rm i}\mathcal{B}$ by $P Q^{-1}$ with $d\times d$ complex matrices $Q$ and $P$ that satisfy certain relationships, and the corresponding dynamics is written in terms of $Q$ and $P$.
The geometry behind the two different parametrizations turns out to be precisely the fact that $\Sigma_{d}$ is the homogeneous space $\Sp(2d,\R)/\U(d)$, i.e., the set of variables $(Q,P)$ naturally lives in the symplectic group $\Sp(2d,\R)$ and $\mathcal{A} + {\rm i}\mathcal{B}$ is its projection to $\Sp(2d,\R)/\U(d)$; see Sections~\ref{sec:Sigma_d} and \ref{sec:parametrization_of_Sigma_d}, and also \citet{Oh2015b}.

As simple as the correspondence sounds, one encounters an obstacle when trying to interpret the two formulations from the symplectic/Hamiltonian point of view.
On one hand, it is fairly straightforward to formulate Heller's dynamics with $\mathcal{A}$ and $\mathcal{B}$ from the symplectic/Hamiltonian point of view because $\Sigma_{d}$ is a symplectic manifold; see \citet{OhLe2013}.
On the other hand, it is not so simple to do the same with the parameters $Q$ and $P$ of Hagedorn because the symplectic group $\Sp(2d,\R)$ is clearly not a symplectic manifold in general: Its dimension is $d(2d+1)$, which is odd when $d$ is odd.

\subsection{Main Results and Outline}
The main result of the paper is Theorem~\ref{thm:Siegel-MaWe}, which is stated at the beginning of Section~\ref{sec:Sigma_d-MaWe}: In short, we show that the Siegel upper half space $\Sigma_{d}$ is identified with the Marsden--Weinstein quotient arising from the cotangent bundle $T^{*}\R^{2d^{2}}$ with symmetry group $\Orth(2d)$.
Specifically, a certain level set of the momentum map and the corresponding isotropy group are identified with $\Sp(2d,\R)$ and $\U(d)$, respectively, thereby giving rise to the homogeneous space $\Sp(2d,\R)/\U(d)$ in the context of symplectic reduction.

The theorem also gives a clear symplectic/Hamiltonian picture of the connection between the two different parametrizations of the Gaussian wave packet dynamics by showing that one is a Hamiltonian system on $T^{*}\R^{d} \times T^{*}\R^{2d^{2}}$ with $\Orth(2d)$-symmetry and the other is its reduced Hamiltonian system on $T^{*}\R^{d} \times \Sigma_{d}$.

The outline of the paper is as follows:
We first review, in Section~\ref{sec:Sigma_d}, the geometry of the Siegel upper half space $\Sigma_{d}$ going through its realization as a homogeneous space; some of the results there will be later referred to when proving the main theorem.
Section~\ref{sec:Sigma_d-MaWe} states and proves the main result of this paper, Theorem~\ref{thm:Siegel-MaWe}.
Finally, in Section~\ref{sec:GWP}, we apply the theorem to the Gaussian wave packet dynamics to interpret the correspondence between two different formulations from the symplectic/Hamiltonian point of view.
We also exploit the Hamiltonian nature of the problem to derive a semiclassical angular momentum by applying Noether's theorem to semiclassical systems with rotational symmetry; this complements our earlier work on semiclassical angular momentum~\cite{Oh2015b}.

\section{The Siegel Upper Half Space $\Sigma_{d}$}
\label{sec:Sigma_d}
\subsection{Geometry of the Siegel Upper Half Space}
We first briefly review the well-known treatment of the Siegel upper half space $\Sigma_{d}$ as a homogeneous space (see \citet{Si1943} and also \citet[Section~4.5]{Fo1989} and \citet[Exercise~2.28 on p.~48]{McSa1999}).
Specifically, we show the following identification alluded above: 
\begin{equation*}
  \Sigma_{d} \cong \Sp(2d,\R)/\U(d),
\end{equation*}
where $\Sp(2d,\R)$ is the symplectic group of degree $2d$ over real numbers, i.e.,
\begin{equation*}
  \Sp(2d,\R) \defeq
  \setdef{
    S \in \Mat_{2d}(\R)
  }{ S^{T} J S = J}
  \quad\text{with}\quad
  J =
  \begin{bmatrix}
    0 & I_{d} \\
    -I_{d} & 0
  \end{bmatrix},
\end{equation*}
or equivalently, written as block matrices consisting of $d \times d$ submatrices,
\begin{equation}
  \label{def:Sp2d-block}
  \Sp(2d,\R) \defeq
  \setdef{
    \begin{bmatrix}
      A & B \\
      C & D
    \end{bmatrix}
    \in \Mat_{2d}(\R)
  }{ A^{T}C = C^{T}A,\, B^{T}D = D^{T}B,\, A^{T}D - C^{T}B = I_{d} },
\end{equation}
and $\U(d)$ is the unitary group of degree $d$.

Consider the (left) action of $\Sp(2d,\R)$ on $\Sigma_{d}$ defined by the generalized linear fractional transformation
\begin{equation*}
  \Psi\colon \Sp(2d,\R) \times \Sigma_{d} \to \Sigma_{d};
  \quad
  \parentheses{
    \begin{bmatrix}
      A & B \\
      C & D
    \end{bmatrix},
    \mathcal{X}
  }
  \mapsto
  (C + D\mathcal{X})(A + B\mathcal{X})^{-1}.
\end{equation*}
This action is transitive: By choosing
\begin{equation}
  \label{eq:X-mathcalAB}
  X \defeq
  \begin{bmatrix}
    A & B \\
    C & D
  \end{bmatrix}
  =
  \begin{bmatrix}
    I_{d} & 0 \\
    \mathcal{A} & I_{d}
  \end{bmatrix}
  \begin{bmatrix}
    \mathcal{B}^{-1/2} & 0 \\
    0 & \mathcal{B}^{1/2}
  \end{bmatrix}
  =
  \begin{bmatrix}
    \mathcal{B}^{-1/2} & 0 \\
    \mathcal{A}\mathcal{B}^{-1/2} & \mathcal{B}^{1/2}
  \end{bmatrix},
\end{equation}
which is easily shown to be symplectic, we have
\begin{equation*}
  \Psi_{X}({\rm i}I_{d}) = \mathcal{A} + {\rm i}\mathcal{B}.
\end{equation*}
The isotropy subgroup of the element ${\rm i}I_{d} \in \Sigma_{d}$ is given by
\begin{align}
  \Sp(2d,\R)_{{\rm i}I_{d}} &= \setdef{
    \begin{bmatrix}
      U  & V \\
      -V & U
    \end{bmatrix} \in \Mat_{2d}(\R)
  }{U^{T}U + V^{T}V = I_{d},\, U^{T}V = V^{T}U}
  \nonumber\\
  &= \Sp(2d,\R) \cap \mathsf{O}(2d),
    \label{eq:isotropy}
\end{align}
where $\mathsf{O}(2d)$ is the orthogonal group of degree $2d$; however $\Sp(2d,\R) \cap \mathsf{O}(2d)$ is identified with $\U(d)$ as follows:
\begin{equation}
  \label{eq:U(d)}
  \Sp(2d,\R) \cap \mathsf{O}(2d) \to \U(d);
  \quad
  \begin{bmatrix}
    U  & V \\
    -V & U
  \end{bmatrix}
  \mapsto U + {\rm i}V.
\end{equation}
Hence $\Sp(2d,\R)_{{\rm i}I_{d}} \cong \U(d)$ and thus $\Sigma_{d} \cong \Sp(2d,\R)/\U(d)$.
Indeed, we may identify $\Sp(2d,\R)/\U(d)$ with $\Sigma_{d}$ by the following map:
\begin{equation*}
  \Sp(2d,\R)/\U(d) \to \Sigma_{d};
  \quad
  [Y]_{\U(d)} \mapsto \Psi_{Y}({\rm i}I_{d}),
\end{equation*}
where $[\,\cdot\,]_{\U(d)}$ denotes a left coset of $\U(d)$ in $\Sp(2d,\R)$; then this gives rise to the explicit construction of the quotient map
\begin{equation}
  \label{eq:pi_Ud}
  \pi_{\U(d)}\colon \Sp(2d,\R) \to \Sp(2d,\R)/\U(d) \cong \Sigma_{d};
  \quad
  Y \mapsto \Psi_{Y}({\rm i}I_{d}),
\end{equation}
or more explicitly,
\begin{equation*}
  \pi_{\U(d)}\parentheses{
    \begin{bmatrix}
      A & B \\
      C & D
    \end{bmatrix}
  }
  = (C + {\rm i}D)(A + {\rm i}B)^{-1},
\end{equation*}
where $A + {\rm i}B$ can be shown to be invertible if $
\begin{tbmatrix}
  A & B \\
  C & D
\end{tbmatrix} \in \Sp(2d,\R)$.

As shown by \citet{Si1943}, $\Sigma_{d}$ is equipped with the Hermitian metric
\begin{equation*}
  g_{\Sigma_{d}} \defeq \tr\parentheses{ \mathcal{B}^{-1} \d\mathcal{C}\,\mathcal{B}^{-1} \d\bar{\mathcal{C}}\, }
  = \mathcal{B}^{-1}_{jl} \mathcal{B}^{-1}_{mk} \d\mathcal{C}_{lm} \otimes \d\bar{\mathcal{C}}_{jk}
\end{equation*}
and hence its imaginary part
\begin{equation}
  \label{eq:symplectic_form-Sigma_d}
  \Omega_{\Sigma_{d}} \defeq \Im g_{\Sigma_{d}}
  = \mathcal{B}^{-1}_{jl} \mathcal{B}^{-1}_{mk} \d\mathcal{B}_{lm} \wedge \d\mathcal{A}_{jk}
  = -\d\mathcal{B}^{-1}_{jk} \wedge \d\mathcal{A}_{jk}
\end{equation}
defines a symplectic form on $\Sigma_{d}$.

\subsection{The Iwasawa Decomposition and the Siegel Upper Half Space}
We may make the above geometric structure more explicit by making use of the so-called {\em Iwasawa decomposition} of symplectic matrices~(see, e.g., \citet[Section~2.2.2]{Go2006}).
The Iwasawa decomposition renders any symplectic matrix $X_{1} \defeq \left[\begin{smallmatrix}
  A & B \\
  C & D
\end{smallmatrix}\right]
\in \Sp(2d,\R)$ into the following factorization of symplectic matrices:
\begin{align*}
  X_{1} = 
  \begin{bmatrix}
    A & B \\
    C & D
  \end{bmatrix}
  &=
  \begin{bmatrix}
    I_{d} & 0 \\
    P & I_{d}
  \end{bmatrix}
  \begin{bmatrix}
    L & 0 \\
    0 & L^{-1}
  \end{bmatrix}
  \begin{bmatrix}
    U & V \\
    -V & U
  \end{bmatrix}
  \\
  &=
  \begin{bmatrix}
    L & 0 \\
    PL & L^{-1}
  \end{bmatrix}
  \begin{bmatrix}
    U & V \\
    -V & U
  \end{bmatrix}
  = X_{2}\, \mathcal{U},
\end{align*}
where $P$ and $L$ are symmetric $d \times d$ matrices given by\footnote{The matrix $A A^{T} + B B^{T}$ is always invertible if $X_{1}$ is in $\Sp(2d,\R)$ and hence so is $L$; see, e.g., \citet[Section~2.1.2]{Go2006}.}
\begin{gather*}
  P = (C A^{T} + D B^{T})(A A^{T} + B B^{T})^{-1},
  \qquad
  L = (A A^{T} + B B^{T})^{1/2},
\end{gather*}
and $X_{2}$ and $\mathcal{U} \in \U(d)$ are defined by\footnote{Note that, in the above expression $X_{2}\,\mathcal{U}$, the matrix $\mathcal{U}$ is seen as an element in $\Sp(2d,\R) \cap \Orth(2d)$ via \eqref{eq:U(d)}.}
\begin{equation*}
  X_{2} \defeq
  \begin{bmatrix}
    L & 0 \\
    PL & L^{-1}
  \end{bmatrix},
  \qquad
  \mathcal{U} \defeq U + {\rm i}V = (A A^{T} + B B^{T})^{-1/2}(A + {\rm i}B).
\end{equation*}
Then we see that the matrices $X_{1}$ and $X_{2}$ define the same coset in $\Sp(2d, \R)/\U(d)$, i.e., $[X_{1}]_{\U(d)} = [X_{2}]_{\U(d)}$ because $\Psi_{X_{1}}({\rm i}I_{d}) = \Psi_{X_{2}} \circ \Psi_{\mathcal{U}}({\rm i}I_{d}) = \Psi_{X_{2}}({\rm i}I_{d})$.

Now, take any element $\mathcal{A} + {\rm i}\mathcal{B}$ in the Siegel upper half space $\Sigma_{d}$ and suppose that $\pi_{\U(d)}(X_{1}) = \mathcal{A} + {\rm i}\mathcal{B}$.
Then $\pi_{\U(d)}(X_{1}) = \pi_{\U(d)}(X_{2}) = P + {\rm i}L^{-2}$ using \eqref{eq:X-mathcalAB} and so $P = \mathcal{A}$ and $L = \mathcal{B}^{-1/2}$.
This observation leads to the following expression for those elements in $\Sp(2d,\R)$ that project to $\mathcal{A} + {\rm i}\mathcal{B} \in \Sigma_{d}$ via $\pi_{\U(d)}$:
\begin{equation}
  \label{eq:Iwasawa}
  \pi_{\U(d)}^{-1}(\mathcal{A} + {\rm i}\mathcal{B}) = \setdef{
    \begin{bmatrix}
      \mathcal{B}^{-1/2} & 0 \\
      \mathcal{A}\mathcal{B}^{-1/2} & \mathcal{B}^{1/2}
    \end{bmatrix}
    \begin{bmatrix}
      U  & V \\
      -V & U
    \end{bmatrix}
    \in \Sp(2d,\R)
  }{
    U + {\rm i}V \in \U(d)
  }.
\end{equation}
This expression will be later useful in coordinate calculations.

\section{The Siegel Upper Half Space $\Sigma_{d}$ and Symplectic Reduction}
\label{sec:Sigma_d-MaWe}
In this section, we prove our main result, that the Siegel upper half space $\Sigma_{d}$ is identified with a reduced symplectic manifold in the Marsden--Weinstein~\cite{MaWe1974} sense.
More specifically:
\begin{theorem}
  \label{thm:Siegel-MaWe}
  Let $\mathcal{Z}_{d} \defeq T^{*}\R^{2d^{2}}$ be the cotangent bundle of the vector space $\R^{2d^{2}}$ with the standard symplectic form $\Omega_{\mathcal{Z}_{d}}$, where we identify $\R^{2d^{2}}$ with the set $\Mat_{d \times 2d}(\R)$ of $d \times 2d$ real matrices and $\mathcal{Z}_{d}$ with the set $\Mat_{2d}(\R)$ of $2d \times 2d$ real matrices.
  Consider the action of the orthogonal group $\Orth(2d)$ on $\mathcal{Z}_{d}$ defined by matrix multiplication from the right, and let ${\bf M}\colon \mathcal{Z}_{d} \to \orth(2d)^{*}$ be  the corresponding momentum map.
  Then the Marsden--Weinstein quotient $\overline{\mathcal{Z}}_{d}^{J} \defeq {\bf M}^{-1}(J)/\Orth(2d)_{J}$ at $J = \begin{tbmatrix}
    0 & I_{d} \\
    -I_{d} & 0
  \end{tbmatrix} \in \orth(2d)^{*}$ is the Siegel upper half space $\Sigma_{d}$ with symplectic form $\overline{\Omega}_{J} = -\frac{1}{2}\Omega_{\Sigma_{d}}$, i.e.,
  \begin{equation*}
    \pi_{J}^{*} \overline{\Omega}_{J} = i_{J}^{*} \Omega_{\mathcal{Z}_{d}},
  \end{equation*}
  where $i_{J}\colon {\bf M}^{-1}(J) \hookrightarrow \mathcal{Z}_{d}$ is the inclusion and $\pi_{J}\colon {\bf M}^{-1}(J) \to \overline{\mathcal{Z}}_{d}^{J}$ is the projection to the quotient.
\end{theorem}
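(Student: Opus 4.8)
The plan is to compute the momentum map explicitly, identify the level set $\mathbf{M}^{-1}(J)$ with $\Sp(2d,\R)$ and the isotropy group $\Orth(2d)_{J}$ with $\U(d)$, and then evaluate the reduced form on the global section of $\pi_{J}$ supplied by \eqref{eq:X-mathcalAB}. I would first fix the identification: writing an element of $\mathcal{Z}_{d}\cong\Mat_{2d}(\R)$ as $Z=\begin{tbmatrix}Q\\P\end{tbmatrix}$ with $Q,P\in\Mat_{d\times 2d}(\R)$, the base variable is $Q$ and the fiber variable is $P$, and the action $Z\mapsto ZR$ is the cotangent lift of $Q\mapsto QR$ (right multiplication is orthogonal for the Frobenius metric, so the same $R$ acts on $P$). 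The infinitesimal generator of $\xi\in\orth(2d)$ on the base is $Q\mapsto Q\xi$, so the cotangent-lift momentum map is $\langle\mathbf{M}(Z),\xi\rangle=\tr(P^{T}Q\xi)$. Identifying $\orth(2d)^{*}$ with antisymmetric matrices via $\langle\mu,\xi\rangle=-\tfrac12\tr(\mu\xi)$, a short computation gives
\[
  \mathbf{M}(Z)=Q^{T}P-P^{T}Q=Z^{T}JZ,
\]
whence $\mathbf{M}^{-1}(J)=\setdef{Z}{Z^{T}JZ=J}=\Sp(2d,\R)$ is immediate; in block form $Z=\begin{tbmatrix}A&B\\C&D\end{tbmatrix}$, the equation $Z^{T}JZ=J$ reproduces exactly the three relations in \eqref{def:Sp2d-block}.

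Next I would treat the isotropy group. From the equivariance $\mathbf{M}(ZR)=R^{T}\mathbf{M}(Z)R$ one reads off that the coadjoint stabilizer of $J$ is $\Orth(2d)_{J}=\setdef{R\in\Orth(2d)}{R^{T}JR=J}=\Orth(2d)\cap\Sp(2d,\R)$, which by \eqref{eq:isotropy}--\eqref{eq:U(d)} is precisely $\U(d)$ acting on $\Sp(2d,\R)$ by right translation. Hence $\overline{\mathcal{Z}}_{d}^{J}=\Sp(2d,\R)/\U(d)$, which is $\Sigma_{d}$ by Section~\ref{sec:Sigma_d}. Since $\U(d)$ is compact and acts on the group $\Sp(2d,\R)$ by right multiplication, the action is free and proper; I would also verify that $J$ is a regular value of $\mathbf{M}$, so that the Marsden--Weinstein theorem applies and yields a unique $\overline{\Omega}_{J}$ with $\pi_{J}^{*}\overline{\Omega}_{J}=i_{J}^{*}\Omega_{\mathcal{Z}_{d}}$.

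For the symplectic form, the key observation is that \eqref{eq:X-mathcalAB} defines a smooth global section $\sigma\colon\Sigma_{d}\to\Sp(2d,\R)=\mathbf{M}^{-1}(J)$ of $\pi_{J}$, so that $\pi_{J}\circ\sigma=\id$ and therefore $\overline{\Omega}_{J}=\sigma^{*}i_{J}^{*}\Omega_{\mathcal{Z}_{d}}=\sigma^{*}\Omega_{\mathcal{Z}_{d}}$. This reduces the entire question to pulling the constant-coefficient form $\Omega_{\mathcal{Z}_{d}}=\sum_{i,k}\d Q_{ik}\wedge\d P_{ik}$ back along $\sigma\colon(\mathcal{A},\mathcal{B})\mapsto\begin{tbmatrix}\mathcal{B}^{-1/2}&0\\\mathcal{A}\mathcal{B}^{-1/2}&\mathcal{B}^{1/2}\end{tbmatrix}$. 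Here $A=\mathcal{B}^{-1/2}$, $B=0$, $C=\mathcal{A}\mathcal{B}^{-1/2}$, $D=\mathcal{B}^{1/2}$, so the $B$-block drops out and $\sigma^{*}\Omega_{\mathcal{Z}_{d}}=\tr(\d A^{T}\wedge\d C)$. Writing $\beta\defeq\mathcal{B}^{-1/2}$ and expanding $\d C=\d\mathcal{A}\,\beta+\mathcal{A}\,\d\beta$, the term $\tr(\d\beta\wedge\mathcal{A}\,\d\beta)$ vanishes by antisymmetry of the wedge together with the symmetry of $\beta$ and $\mathcal{A}$, leaving $\tr(\d\beta\wedge\d\mathcal{A}\,\beta)$. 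A short index computation, using symmetry once more to recognize $\beta\,\d\beta$ as a symmetrization of $\tfrac12\,\d(\beta^{2})=\tfrac12\,\d\mathcal{B}^{-1}$, turns this into $\tfrac12\,\d\mathcal{B}^{-1}_{jk}\wedge\d\mathcal{A}_{jk}=-\tfrac12\Omega_{\Sigma_{d}}$ by \eqref{eq:symplectic_form-Sigma_d}.

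I expect the last step---the form computation, and in particular pinning down the constant $-\tfrac12$---to be the main obstacle, since it hinges on the vanishing of the quadratic-in-$\d\beta$ term and on correctly symmetrizing $\beta\,\d\beta$ to recover $\d\mathcal{B}^{-1}$; both the sign and the factor $\tfrac12$ emerge from these manipulations. The identification steps, by contrast, are essentially bookkeeping once $\mathbf{M}$ is recognized as $Z^{T}JZ$, and the regular-value and freeness/properness hypotheses are routine given the group-translation structure of the action on $\mathbf{M}^{-1}(J)=\Sp(2d,\R)$.
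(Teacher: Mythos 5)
Your proposal is correct, and it follows the paper's overall skeleton---compute $\mathbf{M}$, identify $\mathbf{M}^{-1}(J) = \Sp(2d,\R)$ and $\Orth(2d)_{J} = \Sp(2d,\R)\cap\Orth(2d)\cong\U(d)$, then extract the constant from the representative \eqref{eq:X-mathcalAB}---but two steps are executed by genuinely different means. First, you obtain the momentum map from the cotangent-lift formula and package it as $\mathbf{M}(Z) = Q^{T}P - P^{T}Q = Z^{T}JZ$, which makes the identification $\mathbf{M}^{-1}(J) = \setdef{Z}{Z^{T}JZ = J} = \Sp(2d,\R)$ and the equivariance $\mathbf{M}(Z\mathcal{R}) = \mathcal{R}^{T}\mathbf{M}(Z)\mathcal{R}$ one-line observations; the paper instead evaluates $\ip{\mathbf{M}(Z)}{\xi} = \Theta_{\mathcal{Z}_{d}}\parentheses{\xi_{\mathcal{Z}_{d}}(Z)}$ against the three block types of $\xi\in\orth(2d)$ to arrive at \eqref{eq:M}, and then matches the level-set conditions with \eqref{def:Sp2d-block}. (Your pairing $\ip{\mu}{\xi} = -\tfrac{1}{2}\tr(\mu\xi)$ agrees with the paper's $\tfrac{1}{2}\tr(\mu^{T}\xi)$ on antisymmetric matrices, so there is no normalization mismatch in what ``$J$'' means, and your constant is comparable to the paper's.) Second, for the reduced form the paper parametrizes the \emph{entire} level set by $(\mathcal{A},\mathcal{B},U,V)$ as in \eqref{eq:Iwasawa}, pulls back the one-form $\Theta_{\mathcal{Z}_{d}}$, and takes $-\d$, which requires carrying the fiber term $\tr(U^{T}\d{V} - V^{T}\d{U})$ and invoking the unitary constraints to dispose of its contribution; you instead use that \eqref{eq:X-mathcalAB} defines a global section $\sigma$ of $\pi_{J}$, so that $\overline{\Omega}_{J} = (i_{J}\circ\sigma)^{*}\Omega_{\mathcal{Z}_{d}}$ and the $\U(d)$ coordinates never enter. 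Your two computations there---the vanishing of $\tr(\d\beta\wedge\mathcal{A}\,\d\beta)$ from the symmetry of $\mathcal{A}$ and $\beta$ together with antisymmetry of the wedge, and the symmetrization recognizing $\beta\,\d\beta$ as $\tfrac{1}{2}\d(\beta^{2})$ under the trace against $\d\mathcal{A}$---are both valid and give $\sigma^{*}\Omega_{\mathcal{Z}_{d}} = \tfrac{1}{2}\,\d\mathcal{B}^{-1}_{jk}\wedge\d\mathcal{A}_{jk} = -\tfrac{1}{2}\Omega_{\Sigma_{d}}$, as required. The trade-off: your route is shorter and isolates the sign and the factor $\tfrac{1}{2}$ more cleanly, while the paper's adapted coordinates on all of $\mathbf{M}^{-1}(J)$ and its formula for $i_{J}^{*}\Theta_{\mathcal{Z}_{d}}$ exhibit the principal-bundle structure explicitly and are reused later (for the reduced Hamiltonian in Proposition~\ref{prop:quadratic}), so the extra work there is not wasted. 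Your freeness/properness argument and the regular-value remark (both consequences of invertibility of elements of the level set) match the paper's and are indeed routine.
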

We prove the above theorem in the rest of the section.

\subsection{Basic Setup}
Consider the real vector space $\R^{2d^{2}}$; we prefer to write each element in $\R^{2d^{2}}$ as a $d \times 2d$ block matrix consisting of two $d \times d$ submatrices, i.e.,
\begin{equation*}
  \R^{2d^{2}}
  \cong \Mat_{d \times 2d}(\R)
  \defeq \setdef{
    \begin{bmatrix}
      Q_{1} & Q_{2}
    \end{bmatrix}
  }{
    Q_{1}, Q_{2} \in \Mat_{d}(\R)
  }.
\end{equation*}
Let $\mathcal{Z}_{d} \defeq T^{*}\R^{2d^{2}}$ be the cotangent bundle of $\R^{2d^{2}}$.
Then each element $Z$ in $\mathcal{Z}_{d}$ is identified with a $2d \times 2d$ block matrix consisting of four $d \times d$ submatrices, i.e.,
\begin{equation}
  \label{eq:mathcalZ}
  \mathcal{Z}_{d} \defeq T^{*}\R^{2d^{2}} \cong \Mat_{2d}(\R) = \setdef{
    Z \defeq 
    \begin{bmatrix}
      Q_{1} & Q_{2} \smallskip\\
      P_{1} & P_{2}
    \end{bmatrix}
  }{
    Q_{1}, Q_{2}, P_{1}, P_{2} \in \Mat_{d}(\R)
  }.
\end{equation}
The standard symplectic structure on $\mathcal{Z}_{d}$ is given by
\begin{equation}
  \label{eq:Omega-Z_d}
  \Omega_{\mathcal{Z}_{d}}
  \defeq \d{Q_{1}} \wedge \d{P_{1}}  + \d{Q_{2}} \wedge \d{P_{2}}
  = \d{Q_{1}^{jk}} \wedge \d{P_{1,jk}}  + \d{Q_{2}^{jk}} \wedge \d{P_{2,jk}},
\end{equation}
where $j$ and $k$ run from $1$ to $d$, and $Q_{l}^{jk}$ and $P_{l,jk}$ stand for the $(j,k)$-entries of the matrices $Q_{l}$ and $P_{l}$, respectively, for $l = 1, 2$.
With the canonical one-form $\Theta_{\mathcal{Z}_{d}}$ on $\mathcal{Z}_{d}$ defined by
\begin{equation*}
  \Theta_{\mathcal{Z}_{d}} \defeq \tr(P_{1}^{T}\d{Q_{1}}) + \tr(P_{2}^{T}\d{Q_{2}})
  = P_{1,jk}\d{Q_{1}^{jk}} + P_{2,jk}\d{Q_{2}^{jk}},
\end{equation*}
one can write the symplectic form $\Omega_{\mathcal{Z}_{d}}$ as
\begin{equation*}
  \Omega_{\mathcal{Z}_{d}} = -\d\Theta_{\mathcal{Z}_{d}}. 
\end{equation*}

\subsection{$\Orth(2d)$-action and Momentum Map}
Consider the action of the orthogonal group $\Orth(2d)$ on $\mathcal{Z}_{d}$ defined by matrix multiplication from the right, i.e.,
\begin{equation}
  \label{eq:Phi}
  \Phi\colon \Orth(2d) \times \mathcal{Z}_{d} \to \mathcal{Z}_{d};
  \qquad
  (\mathcal{R}, Z) \mapsto Z \mathcal{R}.
\end{equation}
It is a straightforward calculation to see that $\Phi$ leaves the canonical one-form $\Theta_{\mathcal{Z}_{d}}$ invariant, i.e., $\Phi_{\mathcal{R}}^{*}\Theta_{\mathcal{Z}_{d}} = \Theta_{\mathcal{Z}_{d}}$ for any $\mathcal{R} \in \Orth(2d)$, and hence is a symplectic action with respect to the symplectic form $\Omega_{\mathcal{Z}_{d}}$ defined in \eqref{eq:Omega-Z_d}, i.e., $\Phi_{\mathcal{R}}^{*}\Omega_{\mathcal{Z}_{d}} = \Omega_{\mathcal{Z}_{d}}$ for any $\mathcal{R} \in \Orth(2d)$.

What is the momentum map corresponding to the $\Orth(2d)$-action $\Phi$?
Let $\orth(2d)$ be the Lie algebra of $\Orth(2d)$ and $\orth(2d)^{*}$ be the dual of $\orth(2d)$.
For any $\xi \in \orth(2d)$, the corresponding infinitesimal generator $\xi_{\mathcal{Z}_{d}}$ is given by
\begin{equation*}
  \xi_{\mathcal{Z}_{d}}(Z) \defeq \left. \od{}{\eps} \Phi_{\exp(\eps\xi)}(Z) \right|_{\eps=0}
  = Z\xi,
\end{equation*}
where $Z\xi$ stands for the standard matrix multiplication.
Then, since $\mathcal{Z}_{d}$ is an exact symplectic manifold with the symplectic structure $\Omega_{\mathcal{Z}_{d}} = -\d\Theta_{\mathcal{Z}_{d}}$ and the action $\Phi$ leaves $\Theta_{\mathcal{Z}_{d}}$ invariant, the corresponding momentum map ${\bf M} \colon \mathcal{Z}_{d} \to \orth(2d)^{*}$ may be defined as follows (see, e.g., \citet[Theorem~4.2.10 on p.~282]{AbMa1978}): For any $\xi \in \orth(2d)$,
\begin{equation*}
  \ip{ {\bf M}(Z) }{ \xi } = \Theta_{\mathcal{Z}_{d}}\parentheses{ \xi_{\mathcal{Z}_{d}}(Z) }
  \quad\text{or}\quad
  \ip{ {\bf M}(\cdot) }{ \xi } = \ins{\xi_{\mathcal{Z}_{d}}}\Theta_{\mathcal{Z}_{d}}.
\end{equation*}
We equip the Lie algebra $\orth(2d)$ with the inner product
\begin{equation}
  \label{eq:inner_product-o2d}
  \ip{\cdot}{\cdot}\colon \orth(2d) \times \orth(2d) \to \R;
  \qquad
  (\xi, \eta) \mapsto \ip{\xi}{\eta} \defeq \frac{1}{2}\tr(\xi^{T}\eta).
\end{equation}
Then we may identify the dual $\orth(2d)^{*}$ with $\orth(2d)$ itself via the inner product.
So we may write the components of the momentum map ${\bf M}$ as follows:
\begin{equation*}
  {\bf M}\colon \mathcal{Z}_{d} \to \orth(2d)^{*};
  \qquad
  Z = 
  \begin{bmatrix}
    Q_{1} & Q_{2} \smallskip\\
    P_{1} & P_{2}
  \end{bmatrix}
  \mapsto
  \begin{bmatrix}
    M_{11} & M_{12} \smallskip\\
    -M_{12}^{T} & M_{22}
  \end{bmatrix},
\end{equation*}
where both $M_{11}$ and $M_{22}$ are skew-symmetric $d \times d$ matrices and $M_{12}$ is a $d \times d$ matrix, i.e., $M_{11}, M_{22} \in \orth(d)$ and $M_{12} \in \Mat_{d}(\R)$.

Let us first find $M_{11}$.
Pick $\xi = \begin{tbmatrix}
  \xi_{11} & 0 \\
  0 & 0
\end{tbmatrix} \in \orth(2d)$ with $\xi_{11} \in \orth(d)$; then
\begin{equation*}
  \ip{ {\bf M}(Z) }{ \xi }
  = \frac{1}{2}\tr\parentheses{
    \begin{bmatrix}
      M_{11}^{T}\,\xi_{11} & 0 \smallskip\\
      M_{12}^{T}\,\xi_{11} & 0
    \end{bmatrix}
  }
  = \frac{1}{2}\tr\parentheses{ M_{11}^{T}\,\xi_{11} }.
\end{equation*}
On the other hand, the corresponding infinitesimal generator is given by
\begin{equation*}
  \xi_{\mathcal{Z}_{d}}(Z) =
  \begin{bmatrix}
    Q_{1} & Q_{2} \smallskip\\
    P_{1} & P_{2}
  \end{bmatrix}
  \begin{bmatrix}
    \xi_{11} & 0 \smallskip\\
    0 & 0
  \end{bmatrix}
  =
  \begin{bmatrix}
    Q_{1}\,\xi_{11} & 0 \smallskip\\
    P_{1}\,\xi_{11} & 0
  \end{bmatrix}
\end{equation*}
and hence 
\begin{equation*}
  \Theta_{\mathcal{Z}_{d}}\parentheses{ \xi_{\mathcal{Z}_{d}}(Z) }
  = \tr(P_{1}^{T}Q_{1}\xi_{11})
  = \frac{1}{2}\tr\brackets{ (Q_{1}^{T}P_{1} - P_{1}^{T}Q_{1})^{T} \xi_{11} }.
\end{equation*}
Since $\xi_{11} \in \orth(d)$ is arbitrary, we find $M_{11} = Q_{1}^{T}P_{1} - P_{1}^{T}Q_{1}$.

Likewise, $\xi = \begin{tbmatrix}
  0 & 0 \\
  0 & \xi_{22}
\end{tbmatrix} \in \orth(2d)$ with $\xi_{22} \in \orth(d)$ yields $M_{22} = Q_{2}^{T}P_{2} - P_{2}^{T}Q_{2}$.

Finally, taking $\xi = \begin{tbmatrix}
  0 & \xi_{12} \\
  -\xi_{12}^T & 0
\end{tbmatrix} \in \orth(2d)$ with $\xi_{12} \in \Mat_{d}(\R)$, we have
\begin{equation*}
  \ip{ {\bf M}(Z) }{ \xi }
  = \frac{1}{2}\tr\parentheses{
    \begin{bmatrix}
      M_{12}\,\xi_{12}^{T} & M_{11}^{T}\,\xi_{12} \smallskip\\
      -M_{22}^{T}\,\xi_{12}^{T} & M_{12}^{T}\,\xi_{12}
    \end{bmatrix}
  }
  = \tr\parentheses{ M_{12}^{T}\,\xi_{12} },
\end{equation*}
whereas the corresponding infinitesimal generator is
\begin{equation*}
  \xi_{\mathcal{Z}_{d}}(Z) =
  \begin{bmatrix}
    Q_{1} & Q_{2} \smallskip\\
    P_{1} & P_{2}
  \end{bmatrix}
  \begin{bmatrix}
    0 & \xi_{12} \smallskip\\
    -\xi_{12}^T & 0
  \end{bmatrix}
  =
  \begin{bmatrix}
    -Q_{2}\,\xi_{12}^{T} & Q_{1}\,\xi_{12} \smallskip\\
    -P_{2}\,\xi_{12}^{T} & P_{1}\,\xi_{12}
  \end{bmatrix}
\end{equation*}
and so
\begin{equation*}
  \Theta_{\mathcal{Z}_{d}}\parentheses{ \xi_{\mathcal{Z}_{d}}(Z) }
  = \tr(-P_{1}^{T}Q_{2}\,\xi_{12}^{T}) + \tr(P_{2}^{T}Q_{1}\,\xi_{12})
  = \tr\brackets{ (Q_{1}^{T}P_{2} - P_{1}^{T}Q_{2})^{T} \xi_{12} }.
\end{equation*}
Again, since $\xi_{12} \in \Mat_{d}(\R)$ is arbitrary, we find $M_{12} = Q_{1}^{T}P_{2} - P_{1}^{T}Q_{2}$.

As a result, we have the momentum map
\begin{equation}
  \label{eq:M}
  {\bf M}\colon \mathcal{Z}_{d} \to \orth(2d)^{*};
  \qquad
  Z = 
  \begin{bmatrix}
    Q_{1} & Q_{2} \smallskip\\
    P_{1} & P_{2}
  \end{bmatrix}
  \mapsto
  \begin{bmatrix}
    Q_{1}^{T}P_{1} - P_{1}^{T}Q_{1} & Q_{1}^{T}P_{2} - P_{1}^{T}Q_{2} \smallskip\\
    -(Q_{1}^{T}P_{2} - P_{1}^{T}Q_{2})^{T} & Q_{2}^{T}P_{2} - P_{2}^{T}Q_{2}
  \end{bmatrix}.
\end{equation}
It is a straightforward calculation to check that ${\bf M}$ is equivariant, i.e.,
\begin{equation*}
  {\bf M} \circ \Phi_{\mathcal{R}} = \Ad_{\mathcal{R}}^{*} \circ {\bf M}.
\end{equation*}

\subsection{Momentum Level Set and Reduced Space}
\label{ssec:MomentumLevelSetAndReducedSpace}
Now let us look at the level set of the momentum map ${\bf M}$ at $J = \begin{tbmatrix}
  0 & I_{d} \\
  -I_{d} & 0
\end{tbmatrix} \in \orth(2d)^{*}$, i.e.,
\begin{equation*}
  {\bf M}^{-1}(J)
  = \setdef{
    \begin{bmatrix}
      Q_{1} & Q_{2} \smallskip\\
      P_{1} & P_{2}
    \end{bmatrix}
    \in \Mat_{2d}(\R)
  }{
    Q_{1}^{T}P_{1} = P_{1}^{T}Q_{1},\,
    Q_{2}^{T}P_{2} = P_{2}^{T}Q_{2},\,
    Q_{1}^{T}P_{2} - P_{1}^{T}Q_{2} = I_{d}
  }.
\end{equation*}
This is precisely the definition of the symplectic group $\Sp(2d,\R)$ in terms of block matrices shown in \eqref{def:Sp2d-block}, i.e., ${\bf M}^{-1}(J) = \Sp(2d,\R)$.

Moreover, the coadjoint isotropy subgroup $\Orth(2d)_{J}$ of $J \in \orth(2d)^{*}$ is easily identified as
\begin{align*}
  \Orth(2d)_{J} &= \setdef{
    \mathcal{R} \in \Orth(2d)
  }{
    \Ad_{\mathcal{R}}^{*}J = \mathcal{R}^{T} J \mathcal{R} = J
  }
  \\
  &= \Sp(2d,\R) \cap \Orth(2d) \cong \U(d),
\end{align*}
which is precisely the isotropy subgroup $\Sp(2d,\R)_{{\rm i}I_{d}}$ of the action of $\Sp(2d,\R)$ on the Siegel upper half space $\Sigma_{d}$ shown in \eqref{eq:isotropy}.
The action of the coadjoint isotropy subgroup $\Orth(2d)_{J}$ on the momentum level set ${\bf M}^{-1}(J) = \Sp(2d,\R)$ is free, as $S \mathcal{R} = S$ for $S \in \Sp(2d,\R)$ and $\mathcal{R} \in \Orth(2d)$ implies $\mathcal{R} = I_{2d}$; the action is also proper as well because $\U(d)$ is compact.

So we may now invoke the Marsden--Weinstein reduction~\cite{MaWe1974} (see also \citet{Me1973} and \citet[Sections~1.1 \& 1.2]{MaMiOrPeRa2007}) to conclude that the reduced space or the Marsden--Weinstein quotient $\overline{\mathcal{Z}}_{d}^{J} \defeq {\bf M}^{-1}(J)/\Orth(2d)_{J}$ is a symplectic manifold, but then this quotient coincides with the Siegel upper half space $\Sigma_{d} \cong \Sp(2d,\R)/\U(d)$, i.e.,
\begin{equation*}
  \overline{\mathcal{Z}}_{d}^{J} \defeq {\bf M}^{-1}(J)/\Orth(2d)_{J} = \Sp(2d,\R)/\U(d) \cong \Sigma_{d}.
\end{equation*}

\subsection{Reduced Symplectic Form}
\label{ssec:ReducedSymplecticForm}
Let us define the following inclusion and projection maps:
\begin{equation*}
  i_{J}\colon {\bf M}^{-1}(J) \hookrightarrow \mathcal{Z}_{d},
  \qquad
  \pi_{J}\colon {\bf M}^{-1}(J) \to {\bf M}^{-1}(J)/\Orth(2d)_{J} \eqdef \overline{\mathcal{Z}}_{d}^{J}.
\end{equation*}
As shown in \citet{MaWe1974}, the symplectic form $\overline{\Omega}_{J}$ on the reduced symplectic manifold $\overline{\mathcal{Z}}_{d}^{J}$ corresponding to the original one $\Omega_{\mathcal{Z}_{d}}$ is uniquely characterized as follows:
\begin{equation}
  \label{eq:Omega_J}
  \pi_{J}^{*} \overline{\Omega}_{J} = i_{J}^{*} \Omega_{\mathcal{Z}_{d}}.
\end{equation}
We would like to find an expression for $\overline{\Omega}_{J}$.
Let $\mathcal{A} + {\rm i}\mathcal{B}$ be an arbitrary element in $\overline{\mathcal{Z}}_{d}^{J} \cong \Sigma_{d}$.
Since $\pi_{\U(d)}$ in \eqref{eq:pi_Ud} and $\pi_{J}$ are identical, those elements in $\Sp(2d,\R)$ that project to $\mathcal{A} + {\rm i}\mathcal{B}$ are written as in \eqref{eq:Iwasawa}.
But then this implies that any element in $\Sp(2d,\R) = {\bf M}^{-1}(J) = \pi_{J}^{-1}(\Sigma_{d})$ is written as
\begin{equation*}
  \begin{bmatrix}
   \mathcal{B}^{-1/2} & 0 \smallskip\\
   \mathcal{A}\mathcal{B}^{-1/2} & \mathcal{B}^{1/2}
  \end{bmatrix}
  \begin{bmatrix}
    U  & V \smallskip\\
    -V & U
  \end{bmatrix}
  = 
  \begin{bmatrix}
    \mathcal{B}^{-1/2} U & \mathcal{B}^{-1/2} V \smallskip\\
    \mathcal{A}\mathcal{B}^{-1/2} U - \mathcal{B}^{1/2} V & \mathcal{A}\mathcal{B}^{-1/2} V + \mathcal{B}^{1/2} U
  \end{bmatrix},
\end{equation*}
with some $\mathcal{A} + {\rm i}\mathcal{B} \in \Sigma_{d}$ and $U + {\rm i}V \in \U(d)$, i.e., $U^{T}U + V^{T}V = I_{d}$ and $U^{T}V = V^{T}U$.
In other words, the above expression gives an expression for the inclusion $i_{J}\colon {\bf M}^{-1}(J) \hookrightarrow \mathcal{Z}_{d}$ in terms of the coordinates adapted to the horizontal and vertical directions of the principal bundle $\pi_{J}\colon {\bf M}^{-1}(J) \to {\bf M}^{-1}(J)/\U(d)$.
Then the pull-back by $i_{J}$ of the one-form $\Theta_{\mathcal{Z}_{d}}$ is written as
\begin{equation*}
  i_{J}^{*}\Theta_{\mathcal{Z}_{d}} = \frac{1}{2}\tr( \mathcal{A}\, \d\mathcal{B}^{-1} )  + \tr( U^{T}\d{V} - V^{T}\d{U} ),
\end{equation*}
and so, taking into account the relationships between $U$ and $V$,
\begin{equation*}
  i_{J}^{*}\Omega_{\mathcal{Z}_{d}} = -\d(i_{J}^{*}\Theta_{\mathcal{Z}_{d}}) = \frac{1}{2} \d\mathcal{B}^{-1}_{jk} \wedge \d\mathcal{A}_{jk},
\end{equation*}
and hence we have, from \eqref{eq:Omega_J} and \eqref{eq:symplectic_form-Sigma_d}, $\pi_{J}^{*}\overline{\Omega}_{J} = -\frac{1}{2}\pi_{J}^{*} \Omega_{\Sigma_{d}}$.
Since $\pi_{J}$ is a surjective submersion, $\pi_{J}^{*}$ is injective; thus we obtain $\overline{\Omega}_{J} = -\frac{1}{2} \Omega_{\Sigma_{d}}$.
This completes the proof of Theorem~\ref{thm:Siegel-MaWe}.

\section{Application to Gaussian Wave Packet Dynamics}
\label{sec:GWP}
\subsection{The Gaussian Wave Packet Dynamics}
Consider the time-dependent Schr\"odinger equation
\begin{equation}
  \label{eq:Schroedinger}
  {\rm i}\hbar\,\pd{}{t}\psi(x,t) = -\frac{\hbar^{2}}{2m} \Delta \psi(x,t) + V(x)\,\psi(x,t)
\end{equation}
for the wave function $\psi(x,t)$ under the potential $V(x)$, where $\hbar > 0$ is Planck's constant, $t \ge 0$ is the time, $x \in \R^{d}$ is the position in the physical space $\R^{d}$, and $\Delta$ stands for the Laplacian in $\R^{d}$.
Our motivation for identifying the Siegel upper half space $\Sigma_{d}$ as a Marsden--Weinstein quotient comes from a geometric description of the dynamics of the Gaussian wave packet ansatz
\begin{equation}
  \label{eq:psi_0}
  \psi_{0} \defeq \parentheses{ \frac{\det\mathcal{B}}{(\pi\hbar)^{d}} }^{1/4} \exp\braces{ \frac{{\rm i}}{\hbar}\brackets{ \frac{1}{2}(x - q)^{T}(\mathcal{A} + {\rm i}\mathcal{B})(x - q) + p \cdot (x - q) + \phi } }
\end{equation}
for \eqref{eq:Schroedinger}; the factor in front of the exponential normalizes the wave function, i.e., $\norm{ \psi_{0} } = 1$ as an element in $L^{2}(\R^{d})$, and $\psi_{0}$ is parametrized by $(q,p) \in T^{*}\R^{d}$, $\mathcal{C} \defeq \mathcal{A} + {\rm i}\mathcal{B} \in \Sigma_{d}$, and $\phi \in \mathbb{S}^{1}$.
It is well known (see \citet{He1975a,He1976b}) that, when $V$ is quadratic, the Gaussian wave packet~\eqref{eq:psi_0} is an {\em exact} solution of the Schr\"odinger equation~\eqref{eq:Schroedinger} if the parameters $(q, p, \mathcal{A}, \mathcal{B})$ satisfy the set of ODEs
\begin{equation}
  \label{eq:Heller}
  \begin{array}{c}
    \DS
    \dot{q} = \frac{p}{m},
    \qquad
    \dot{p} = -\nabla V(q),
    \medskip\\
    \DS
    \dot{\mathcal{A}} = -\frac{1}{m}(\mathcal{A}^{2} - \mathcal{B}^{2}) - \nabla^{2}V(q),
    \qquad
    \dot{\mathcal{B}} = -\frac{1}{m}(\mathcal{A}\mathcal{B} + \mathcal{B}\mathcal{A}),
  \end{array}
\end{equation}
where $\nabla^{2}V$ is the Hessian matrix of $V$, and the phase $\phi(t)$ is determined by
\begin{equation*}
  \phi(t) = \phi(0) + \int_{0}^{t} \brackets{ \frac{p(s)^{2}}{2m} - V(q(s)) - \frac{\hbar}{2m} \tr(\mathcal{B}(s)) }\,ds.
\end{equation*}

\citet{Ha1980, Ha1998}, on the other hand, has a slightly different parametrization of the Gaussian wave packet~\eqref{eq:psi_0}:
\begin{equation}
  \label{eq:psi_0-Hagedorn}
  \psi_{0} = (\pi\hbar)^{-d/4} (\det Q)^{-1/2} \exp\braces{ \frac{{\rm i}}{\hbar}\brackets{ \frac{1}{2}(x - q)^{T}P Q^{-1}(x - q) + p \cdot (x - q) + S } },
\end{equation}
where $Q$ and $P$ are complex $d \times d$ matrices, i.e., $Q, P \in \Mat_{d}(\C)$, that satisfy
\begin{equation}
  \label{eq:Q_P-Sp}
  Q^{T}P - P^{T}Q = 0
  \quad\text{and}\quad
  Q^{*}P - P^{*}Q = 2{\rm i}I_{d},
\end{equation}
and an appropriate branch cut is taken for $(\det Q)^{1/2}$; also the new parameter $S$ is defined as
\begin{equation*}
  S \defeq \phi - \frac{\hbar}{2}\arg(\det Q).
\end{equation*}
\citet{Ha1980, Ha1998} showed that \eqref{eq:psi_0-Hagedorn} is an exact solution of the Schr\"odinger equation if the potential $V$ is quadratic and also the parameters $(q, p, Q, P)$ satisfy
\begin{equation}
  \label{eq:Hagedorn}
  \dot{q} = \frac{p}{m},
  \qquad
  \dot{p} = -\nabla V(q),
  \qquad
  \dot{Q} = \frac{P}{m},
  \qquad
  \dot{P} =  -\nabla^{2}V(q)\,Q,
\end{equation}
and the quantity $S(t)$ is the classical action integral evaluated along the solution $(q(t),p(t))$, i.e.,
\begin{equation*}
  S(t) = S(0) + \int_{0}^{t} \parentheses{ \frac{p(s)^{2}}{2m} - V(q(s)) }\,ds.
\end{equation*}
It is also shown by \citet{Ha1980, Ha1998} that \eqref{eq:psi_0-Hagedorn} with \eqref{eq:Hagedorn} gives an $O(t\sqrt{\hbar})$ approximation when the potential $V$ is not quadratic as long as it satisfies some regularity assumptions.

\subsection{Parametrization of the Siegel Upper Half Space}
\label{sec:parametrization_of_Sigma_d}
The replacement of $\mathcal{A} + {\rm i}\mathcal{B} \in \Sigma_{d}$ in \eqref{eq:psi_0} by $P Q^{-1}$ in \eqref{eq:psi_0-Hagedorn} has a simple geometric interpretation.
\citet[Section~V.1]{Lu2008} (see also \citet{Oh2015b}) pointed out that the conditions~\eqref{eq:Q_P-Sp} for the matrices $Q$ and $P$ are precisely the conditions for the matrix $\begin{tbmatrix}
  \Re Q & \Im Q \smallskip\\
  \Re P & \Im P
\end{tbmatrix}$ to be symplectic, i.e., 
\begin{align*}
  \Sp(2d,\R)
  &= \setdef{
    \begin{bmatrix}
      \Re Q & \Im Q \smallskip\\
      \Re P & \Im P
    \end{bmatrix}
  }{Q, P \in \Mat_{d}(\C),\, Q^{T}P - P^{T}Q = 0,\, Q^{*}P - P^{*}Q = 2{\rm i}I_{d}}.
\end{align*}
In fact, the projection of these elements to $\Sigma_{d}$ by $\pi_{\U(d)}\colon \Sp(2d,\R) \to \Sigma_{d}$ in \eqref{eq:pi_Ud} gives
\begin{equation}
  \label{eq:pi_Ud-2}
  \pi_{\U(d)}\parentheses{
    \begin{bmatrix}
      \Re Q & \Im Q \smallskip\\
      \Re P & \Im P
    \end{bmatrix}
  }
  = P Q^{-1}.
\end{equation}
It is also easy to show that the dynamics of $\mathcal{A}$ and $\mathcal{B}$ defined in \eqref{eq:Heller} is the projection to $\Sigma_{d}$ by $\pi_{\U(d)}$ of the dynamics of $Q$ and $P$ in \eqref{eq:Hagedorn}; conversely, the dynamics \eqref{eq:Hagedorn} is a proper lift to $\Sp(2d,\R)$ of the dynamics \eqref{eq:Heller} in some appropriate sense; see \citet{Oh2015b}.

\subsection{Interpretation as a Hamiltonian Reduction}
Theorem~\ref{thm:Siegel-MaWe} sheds a new perspective on the above connection between the equations~\eqref{eq:Heller} and \eqref{eq:Hagedorn} of Heller and Hagedorn, respectively, in terms of Hamiltonian reduction.

Let us first introduce a new parametrization of the space $\mathcal{Z}_{d}$:
The observation~\eqref{eq:pi_Ud-2} from the previous subsection motivates us to rewrite the definition~\eqref{eq:mathcalZ} of the space $\mathcal{Z}_{d}$ as follows:
\begin{equation*}
  \mathcal{Z}_{d} \defeq T^{*}\R^{2d^{2}}
  = \setdef{
    Z = 
    \begin{bmatrix}
      \Re Q & \Im Q \smallskip\\
      \Re P & \Im P
    \end{bmatrix}
  }{
    Q, P \in \Mat_{d}(\C)
  },
\end{equation*}
i.e., $Q_{1} + {\rm i}Q_{2} = Q$ and $P_{1} + {\rm i}P_{2} = P$.

Here we assume that the potential $V$ is quadratic for simplicity, and will consider the general case in the next subsection.
If $V$ is quadratic then the Hessian $\nabla^{2}V$ is a constant matrix; as a result, the system~\eqref{eq:Heller} decouples into the classical Hamiltonian system in $T^{*}\R^{d} = \{ (q, p) \}$ and
\begin{equation}
  \label{eq:Heller-A_B}
  \dot{\mathcal{A}} = -\frac{1}{m}(\mathcal{A}^{2} - \mathcal{B}^{2}) - \nabla^{2}V(q),
  \qquad
  \dot{\mathcal{B}} = -\frac{1}{m}(\mathcal{A}\mathcal{B} + \mathcal{B}\mathcal{A})
\end{equation}
in $\Sigma_{d} = \{ (\mathcal{A}, \mathcal{B}) \}$, and similarly, \eqref{eq:Hagedorn} decouples into the classical Hamiltonian system in $T^{*}\R^{d} = \{ (q, p) \}$ and
\begin{equation}
  \label{eq:Hagedorn-Q_P}
  \dot{Q} = \frac{P}{m},
  \qquad
  \dot{P} =  -\nabla^{2}V\,Q
\end{equation}
in $\mathcal{Z}_{d} = \{ (Q, P) \}$.
It turns out that the dynamics~\eqref{eq:Heller-A_B} in $\Sigma_{d}$ is obtained by Hamiltonian reduction of the dynamics~\eqref{eq:Hagedorn-Q_P} in $\mathcal{Z}_{d}$:
\begin{proposition}[Reduction of Gaussian wave packet dynamics---quadratic potentials]
  \label{prop:quadratic}
  Suppose that the potential $V$ is quadratic, and let $H_{\mathcal{Z}_{d}}\colon \mathcal{Z}_{d} \to \R$ be the Hamiltonian defined by
  \begin{equation}
    \label{eq:H_Z_d}
    H_{\mathcal{Z}_{d}}(Z) \defeq \frac{1}{2m}\tr(P^{*}P) + \frac{1}{2}\tr(Q^{*}\, \nabla^{2}V\, Q).
  \end{equation}
  Then:
  \begin{enumerate}
    \renewcommand{\theenumi}{\roman{enumi}}
  \item The Hamiltonian vector field $X_{H_{\mathcal{Z}_{d}}}$ on $\mathcal{Z}_{d}$ defined by the Hamiltonian system
    \begin{equation}
      \label{eq:HamSys_mathcalZ}
      \ins{X_{H_{\mathcal{Z}_{d}}}}\Omega_{\mathcal{Z}_{d}} = \d{H_{\mathcal{Z}_{d}}}
    \end{equation}
     gives the equations~\eqref{eq:Hagedorn-Q_P} of Hagedorn.
    \smallskip
  \item The Hamiltonian $H_{\mathcal{Z}_{d}}$ is invariant under the $\Orth(2d)$-action $\Phi$ defined in \eqref{eq:Phi}, i.e., $H_{\mathcal{Z}_{d}} \circ \Phi_{\mathcal{R}} = H_{\mathcal{Z}_{d}}$ for any $\mathcal{R} \in \Orth(2d)$, and hence the Hamiltonian system~\eqref{eq:HamSys_mathcalZ} conserves the corresponding momentum map ${\bf M}$ in \eqref{eq:M}; particularly, ${\bf M}^{-1}(J) = \Sp(2d,\R)$ is an invariant manifold of the system~\eqref{eq:HamSys_mathcalZ}.
    \smallskip
  \item The reduced Hamiltonian $\overline{H}_{J}\colon \overline{\mathcal{Z}}_{d}^{J} \to \R$, i.e., the function $\overline{H}_{J}$ uniquely characterized by
    \begin{equation}
      \label{def:H_J}
      \overline{H}_{J} \circ \pi_{J} = H_{\mathcal{Z}_{d}} \circ i_{J},
    \end{equation}
    takes the form
    \begin{equation}
      \label{eq:H_J}
      \overline{H}_{J}(\mathcal{A}, \mathcal{B}) = \frac{1}{2} \tr\brackets{ \mathcal{B}^{-1}\parentheses{ \frac{\mathcal{A}^{2} + \mathcal{B}^{2}}{m} + \nabla^{2}V } }.
    \end{equation}
    \smallskip
  \item The Hamiltonian vector field $X_{\overline{H}_{J}}$ on the reduced space $\overline{\mathcal{Z}}_{d}^{J} \cong \Sigma_{d}$ defined by the Hamiltonian system
    \begin{equation}
      \label{eq:RedHamSys}
      \ins{X_{\overline{H}_{J}}}\overline{\Omega}_{J} = \d\overline{H}_{J}
    \end{equation}
    gives the equations~\eqref{eq:Heller-A_B} of Heller.
  \end{enumerate}
\end{proposition}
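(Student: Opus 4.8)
The plan is to verify the four claims by direct computation, reserving the genuine work for parts (iii) and (iv). For (i), I would first split $H_{\mathcal{Z}_{d}}$ into real blocks: writing $Q = Q_{1} + {\rm i}Q_{2}$ and $P = P_{1} + {\rm i}P_{2}$ and using that $\nabla^{2}V$ is real and symmetric, the cross terms are purely imaginary and cancel under the trace, so that
\[
  H_{\mathcal{Z}_{d}} = \frac{1}{2m}\tr\parentheses{P_{1}^{T}P_{1} + P_{2}^{T}P_{2}} + \frac{1}{2}\tr\parentheses{Q_{1}^{T}\,\nabla^{2}V\,Q_{1} + Q_{2}^{T}\,\nabla^{2}V\,Q_{2}}.
\]
Since $\Omega_{\mathcal{Z}_{d}}$ in \eqref{eq:Omega-Z_d} is canonical in the conjugate pairs $(Q_{l}, P_{l})$, Hamilton's equations read $\dot{Q}_{l} = \partial H_{\mathcal{Z}_{d}}/\partial P_{l} = P_{l}/m$ and $\dot{P}_{l} = -\partial H_{\mathcal{Z}_{d}}/\partial Q_{l} = -\nabla^{2}V\,Q_{l}$ for $l = 1,2$; recombining the real and imaginary parts gives exactly \eqref{eq:Hagedorn-Q_P}. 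For (ii), I would rewrite $\tr(P^{*}P) = \tr\parentheses{\brackets{P_{1}\;P_{2}}\brackets{P_{1}\;P_{2}}^{T}}$ and $\tr(Q^{*}\nabla^{2}V\,Q) = \tr\parentheses{\nabla^{2}V\,\brackets{Q_{1}\;Q_{2}}\brackets{Q_{1}\;Q_{2}}^{T}}$. Since the right $\Orth(2d)$-action sends $\brackets{Q_{1}\;Q_{2}} \mapsto \brackets{Q_{1}\;Q_{2}}\mathcal{R}$ and likewise for $P$, and $\mathcal{R}\mathcal{R}^{T} = I_{2d}$, both Gram matrices are unchanged and $H_{\mathcal{Z}_{d}}\circ\Phi_{\mathcal{R}} = H_{\mathcal{Z}_{d}}$. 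The standard fact that invariance of the Hamiltonian forces conservation of the momentum map \eqref{eq:M} along the flow then shows each level set is invariant, in particular ${\bf M}^{-1}(J) = \Sp(2d,\R)$.

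For (iii), I would substitute the explicit inclusion coordinates obtained from the Iwasawa factorization \eqref{eq:Iwasawa}, namely $Q_{1} = \mathcal{B}^{-1/2}U$, $Q_{2} = \mathcal{B}^{-1/2}V$, $P_{1} = \mathcal{A}\mathcal{B}^{-1/2}U - \mathcal{B}^{1/2}V$, and $P_{2} = \mathcal{A}\mathcal{B}^{-1/2}V + \mathcal{B}^{1/2}U$, into $H_{\mathcal{Z}_{d}}\circ i_{J}$. The key is that the $\U(d)$ relations $UU^{T} + VV^{T} = I_{d}$ and $UV^{T} = VU^{T}$ collapse the $U,V$-dependence: a short computation gives $Q_{1}Q_{1}^{T} + Q_{2}Q_{2}^{T} = \mathcal{B}^{-1}$ and, after the cross terms cancel by $UV^{T} = VU^{T}$, $P_{1}P_{1}^{T} + P_{2}P_{2}^{T} = \mathcal{A}\mathcal{B}^{-1}\mathcal{A} + \mathcal{B}$. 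Taking traces and recombining produces $\frac{1}{2m}\tr\parentheses{\mathcal{B}^{-1}(\mathcal{A}^{2}+\mathcal{B}^{2})} + \frac{1}{2}\tr\parentheses{\mathcal{B}^{-1}\nabla^{2}V}$, which is \eqref{eq:H_J}; the disappearance of $U,V$ is precisely what makes $\overline{H}_{J}$ well defined by \eqref{def:H_J}.

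For (iv), I would pass to the coordinates $(\mathcal{A}, \beta)$ with $\beta \defeq \mathcal{B}^{-1}$, in which the reduced form becomes $\overline{\Omega}_{J} = -\frac{1}{2}\Omega_{\Sigma_{d}} = \frac{1}{2}\,\d\beta_{jk}\wedge\d\mathcal{A}_{jk}$ by \eqref{eq:symplectic_form-Sigma_d}. Evaluating on symmetric variations gives $\overline{\Omega}_{J}\big((\dot{\mathcal{A}},\dot{\beta}),(\delta\mathcal{A},\delta\beta)\big) = \frac{1}{2}\tr(\dot{\beta}\,\delta\mathcal{A}) - \frac{1}{2}\tr(\dot{\mathcal{A}}\,\delta\beta)$, while differentiating $\overline{H}_{J}$ through $\beta$ (so $\mathcal{B} = \beta^{-1}$ and $\tr(\beta^{-1})$ contributes $-\tr(\mathcal{B}^{2}\,\delta\beta)$) yields $\d\overline{H}_{J} = \frac{1}{2m}\tr\parentheses{(\mathcal{A}\beta + \beta\mathcal{A})\,\delta\mathcal{A}} + \frac{1}{2}\tr\big((\frac{\mathcal{A}^{2}-\mathcal{B}^{2}}{m} + \nabla^{2}V)\,\delta\beta\big)$. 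Since the trace pairing is nondegenerate on symmetric matrices and all bracketed coefficients are symmetric, matching $\ins{X_{\overline{H}_{J}}}\overline{\Omega}_{J} = \d\overline{H}_{J}$ coefficientwise gives $\dot{\beta} = \frac{1}{m}(\mathcal{A}\beta + \beta\mathcal{A})$ and $\dot{\mathcal{A}} = -\frac{1}{m}(\mathcal{A}^{2} - \mathcal{B}^{2}) - \nabla^{2}V$; converting $\dot{\beta} = -\mathcal{B}^{-1}\dot{\mathcal{B}}\mathcal{B}^{-1}$ back to $\dot{\mathcal{B}} = -\frac{1}{m}(\mathcal{A}\mathcal{B} + \mathcal{B}\mathcal{A})$ recovers \eqref{eq:Heller-A_B}.

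The main obstacle I anticipate is the bookkeeping in (iv): the reduced form is naturally linear in $\d\mathcal{B}^{-1}$ rather than $\d\mathcal{B}$, so the change of variables $\beta = \mathcal{B}^{-1}$ is essential to bring $\overline{\Omega}_{J}$ into a near-canonical shape, and one must track the factor $-\frac{1}{2}$, the derivative of $\tr(\beta^{-1})$, and the symmetry of the coordinates carefully in order to land on the exact coefficients of Heller's equations. The conceptual crux, however, sits already in (iii)---the cancellation of the $\U(d)$ fibre variables $U,V$---since that is what guarantees $H_{\mathcal{Z}_{d}}$ descends to the quotient in the first place.
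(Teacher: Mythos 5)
Your proposal is correct and follows essentially the same route as the paper: direct verification of Hamilton's equations for (i), $\Orth(2d)$-invariance plus Noether's theorem for (ii), substitution of the Iwasawa parametrization \eqref{eq:Iwasawa} into $H_{\mathcal{Z}_{d}} \circ i_{J}$ for (iii), and a direct computation of the reduced Hamiltonian vector field for (iv)---precisely the ``straightforward calculations'' and the coordinate work of Section~\ref{ssec:ReducedSymplecticForm} that the paper's proof invokes. Your explicit details (the cancellation of the $U,V$ fibre variables via the unitarity relations, and the change of variables $\beta = \mathcal{B}^{-1}$ with careful sign and symmetry bookkeeping) are all accurate and simply flesh out what the paper leaves implicit.
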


\begin{proof}
  The first assertion follows from straightforward calculations; so is the $\Orth(2d)$-invariance of the Hamiltonian~\eqref{eq:H_Z_d}.

  The remaining assertions follow easily from Theorem~\ref{thm:Siegel-MaWe}:
  By Noether's Theorem (see, e.g., \citet[Theorem~11.4.1 on p.~372]{MaRa1999}), the $\Orth(2d)$-invariance of the Hamiltonian $H_{\mathcal{Z}_{d}}$ implies that the momentum map ${\bf M}$ is conserved along the flow defined by $X_{H_{\mathcal{Z}_{d}}}$; we note that this result is observed by \citet[Lemma~V.1.4 on p.~126]{Lu2008} via direct calculations.
  We have already shown in Section~\ref{ssec:MomentumLevelSetAndReducedSpace} that ${\bf M}^{-1}(J) = \Sp(2d,\R)$.

  The reduced Hamiltonian~\eqref{eq:H_J} follows from the defining relation~\eqref{def:H_J} and coordinate calculations that are similar to those performed in Section~\ref{ssec:ReducedSymplecticForm} when finding the reduced symplectic form $\overline{\Omega}_{J}$.

  Finally, that the reduced Hamiltonian system is given by \eqref{eq:RedHamSys} is standard in symplectic reduction~\cite{MaWe1974}, and straightforward calculations yield the last assertion.
\end{proof}

\begin{remark}
  For the Hamiltonian dynamics~\eqref{eq:HamSys_mathcalZ} to be interpreted as the dynamics of the Gaussian wave packet~\eqref{eq:psi_0-Hagedorn}, one needs to restrict the initial condition to the invariant manifold ${\bf M}^{-1}(J) = \Sp(2d,\R) \subset \mathcal{Z}_{d}$, which is equivalent to \eqref{eq:Q_P-Sp} as mentioned above.
  This guarantees that $Q$ is invertible and also that $P Q^{-1} \in \Sigma_{d}$, and hence $\psi_{0} \in L^{2}(\R^{d})$.
\end{remark}

\begin{remark}
  Setting $\mathcal{C} \defeq \mathcal{A} + {\rm i}\mathcal{B}$, \eqref{eq:Heller-A_B} is written as a matrix Riccati equation, i.e.,
  \begin{equation*}
    \dot{\mathcal{C}} = -\frac{1}{m}\mathcal{C}^{2} - \nabla^{2}V.
  \end{equation*}
  Its Hamiltonian lift \eqref{eq:Hagedorn-Q_P} to $\Sp(2d,\R) \subset \mathcal{Z}_{d}$ is linear in $Q$ and $P$, and may be considered as an example of the Hirota bilinearization of the matrix Riccati equation; see, e.g., \citet{Hi1979, Hi2000, Hi2004}.
\end{remark}

\subsection{Full Dynamics as Hamiltonian Systems}
What if the potential $V$ is {\em not} quadratic?
It turns out that, with a slight modification in the equations \eqref{eq:Heller} and \eqref{eq:Hagedorn}, these systems may also be rendered Hamiltonian as well, and again one is the reduced version of the other.

Let $\Omega_{T^{*}\R^{d}} \defeq \d{q^{i}} \wedge \d{p_{i}}$ be the standard symplectic form on $T^{*}\R^{d}$, $\pr_{1}\colon T^{*}\R^{d} \times \mathcal{Z}_{d} \to T^{*}\R^{d}$ and $\pr_{2}\colon T^{*}\R^{d} \times \mathcal{Z}_{d} \to \mathcal{Z}_{d}$ be the projections, and define a symplectic form $\Omega$ on $T^{*}\R^{d} \times \mathcal{Z}_{d}$ by
\begin{equation}
  \label{eq:Omega}
  \Omega \defeq \pr_{1}^{*} \Omega_{T^{*}\R^{d}} + \frac{\hbar}{2}\pr_{2}^{*} \Omega_{\mathcal{Z}_{d}}
\end{equation}
and a Hamiltonian $H\colon T^{*}\R^{d} \times \mathcal{Z}_{d} \to \R$ by
\begin{align}
  H &= \frac{p^{2}}{2m} + V(q) + \frac{\hbar}{2}H_{\mathcal{Z}_{d}}
  \nonumber\\
  &= \frac{p^{2}}{2m} + \frac{\hbar}{4m}\tr(P^{*}P)
    + V(q) + \frac{\hbar}{4} \tr(Q^{*}\, \nabla^{2}V\, Q).
    \label{eq:H}
\end{align}
Note that the symplectic form $\Omega$ is written as $\Omega = -\d\Theta$ with
\begin{equation}
  \label{eq:Theta}
  \Theta \defeq \pr_{1}^{*} \Theta_{T^{*}\R^{d}} + \frac{\hbar}{2}\pr_{2}^{*} \Theta_{\mathcal{Z}_{d}}
  = p_{j}\,\d{q^{j}} + \frac{\hbar}{2}\brackets{ 
    \tr(P_{1}^{T}\d{Q_{1}}) + \tr(P_{2}^{T}\d{Q_{2}})
  }.
\end{equation}

\begin{proposition}[Reduction of Gaussian wave packet dynamics---general potentials]
  Let $V \in C^{3}(\R^{d})$ and $H\colon T^{*}\R^{d} \times \mathcal{Z}_{d} \to \R$ be the Hamiltonian defined in \eqref{eq:H}.
  \begin{enumerate}
    \renewcommand{\theenumi}{\roman{enumi}}
  \item The Hamiltonian system
    \begin{subequations}
      \label{eq:Hagedorn2}
      \begin{equation}
        \ins{X_{H}}\Omega = \d{H}
      \end{equation}
      gives the system
      \begin{equation}
        \begin{array}{c}
          \DS
          \dot{q} = \frac{p}{m},
          \qquad
          \dot{p} = -\pd{}{q} \braces{ V(q) + \frac{\hbar}{4}\tr[Q^{*} \nabla^{2}V(q) Q] },
          \medskip\\
          \DS
          \dot{Q} = \frac{P}{m},
          \qquad
          \dot{P} =  -\nabla^{2}V(q)\,Q.
        \end{array}
      \end{equation}
    \end{subequations}
    \smallskip
  \item The Hamiltonian~\eqref{eq:H} is $\Orth(2d)$-invariant under the $\Orth(2d)$-action $\id_{T^{*}\R^{d}} \times \Phi$ on $T^{*}\R^{d} \times \mathcal{Z}_{d}$ and hence the corresponding momentum map $\tilde{\bf M}\colon T^{*}\R^{d} \times \mathcal{Z}_{d} \to \orth(2d)^{*}$, which is given by $\tilde{\bf M} = {\bf M} \circ \pr_{2}$, is conserved along the flow of the system~\eqref{eq:Hagedorn2}.
    In particular, $\tilde{\bf M}^{-1}(J) = T^{*}\R^{d} \times \Sp(2d,\R)$ is an invariant manifold of \eqref{eq:Hagedorn2}.
    \smallskip
  \item Symplectic reduction by the $\Orth(2d)$-symmetry at the value $J \in \orth(2d)^{*}$ yields the reduced symplectic manifold $T^{*}\R^{d} \times \Sigma_{d}$ with symplectic form
    \begin{equation}
      \label{eq:overline_Omega}
      \overline{\Omega} \defeq \d{q^{i}} \wedge \d{p_{i}} + \frac{\hbar}{4}\, \d\mathcal{B}^{-1}_{jk} \wedge \d\mathcal{A}_{jk}
    \end{equation}
    and the reduced Hamiltonian
    \begin{equation}
      \label{eq:overline_H}
      \overline{H} = \frac{p^{2}}{2m} + V(q)
      + \frac{\hbar}{4}\tr\brackets{ \mathcal{B}^{-1}\parentheses{ \frac{\mathcal{A}^{2} + \mathcal{B}^{2}}{m} + \nabla^{2}V(q) } },
    \end{equation}
    i.e., they are uniquely determined by $\mathcal{I}_{J}^{*} \Omega = \Pi_{J}^{*} \overline{\Omega}$ and $H \circ \mathcal{I}_{J} = \overline{H} \circ \Pi_{J}$, where $\mathcal{I}_{J}\colon \tilde{\bf M}^{-1}(J) \hookrightarrow T^{*}\R^{d} \times \mathcal{Z}_{d}$ and $\Pi_{J}\colon \tilde{\bf M}^{-1}(J) \to T^{*}\R^{d} \times \Sigma_{d}$ are the inclusion and projection, respectively, defined by
    \begin{equation*}
      \mathcal{I}_{J} \defeq \id_{T^{*}\R^{d}} \times i_{J},
      \qquad
      \Pi_{J} \defeq \id_{T^{*}\R^{d}} \times \pi_{J}.
    \end{equation*}
    \smallskip
  \item The reduced Hamiltonian system
    \begin{subequations}
      \label{eq:Heller2}
      \begin{equation}
        \ins{X_{\overline{H}}}\overline{\Omega} = \d{\overline{H}}
      \end{equation}
      gives
      \begin{equation}
        \begin{array}{c}
          \DS
          \dot{q} = \frac{p}{m},
          \qquad
          \dot{p} = -\pd{}{q}\brackets{ V(q) + \frac{\hbar}{4} \tr\parentheses{ \mathcal{B}^{-1} \nabla^{2}V(q) } },
          \medskip\\
          \DS
          \dot{\mathcal{A}} = -\frac{1}{m}(\mathcal{A}^{2} - \mathcal{B}^{2}) - \nabla^{2}V(q),
          \qquad
          \dot{\mathcal{B}} = -\frac{1}{m}(\mathcal{A}\mathcal{B} + \mathcal{B}\mathcal{A}).
        \end{array}
      \end{equation}
    \end{subequations}
    \smallskip
  \item Particularly, when the potential $V$ is quadratic, the systems \eqref{eq:Hagedorn2} and \eqref{eq:Heller2} recover \eqref{eq:Hagedorn} and \eqref{eq:Heller}, respectively.
  \end{enumerate}
\end{proposition}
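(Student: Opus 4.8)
The plan is to treat this proposition as a product-phase-space extension of Proposition~\ref{prop:quadratic}, exploiting that the $\Orth(2d)$-action $\id_{T^{*}\R^{d}} \times \Phi$ is trivial on the first factor and that both $\Omega$ and $\tilde{\bf M}$ split according to the product structure. For (i) I would write $X_{H}$ in the coordinates $(q, p, Q_{1}, Q_{2}, P_{1}, P_{2})$ and match $\ins{X_{H}}\Omega = \d H$ component by component. The factor $\hbar/2$ in front of $\pr_{2}^{*}\Omega_{\mathcal{Z}_{d}}$ cancels the $\hbar/2$ multiplying $H_{\mathcal{Z}_{d}}$, so the $(Q,P)$-equations come out exactly as in Proposition~\ref{prop:quadratic}(i), namely $\dot{Q} = P/m$ and $\dot{P} = -\nabla^{2}V(q)\,Q$ with the Hessian now evaluated at the moving $q$; the only genuinely new feature is that the $q$-gradient of $H$ picks up the coupling term $\frac{\hbar}{4}\tr(Q^{*}\nabla^{2}V(q)Q)$, which yields the modified force law $\dot{p} = -\partial_{q}[V + \frac{\hbar}{4}\tr(Q^{*}\nabla^{2}V(q)Q)]$.

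For (ii) I would note that $\nabla^{2}V(q)$ is a spectator under the action, since $q$ is fixed by $\id_{T^{*}\R^{d}} \times \Phi$; hence the $\Orth(2d)$-invariance of $H$ reduces, at each fixed $q$, to the invariance of $H_{\mathcal{Z}_{d}}$ established in Proposition~\ref{prop:quadratic}(ii), the point being that $\tr(P^{*}P)$ and $\tr(Q^{*}\nabla^{2}V(q)Q)$ are unchanged under $Z \mapsto Z\mathcal{R}$ because $\mathcal{R}\mathcal{R}^{T} = I_{2d}$. Noether's theorem then gives conservation of $\tilde{\bf M} = {\bf M} \circ \pr_{2}$, and since $\tilde{\bf M}^{-1}(J) = T^{*}\R^{d} \times {\bf M}^{-1}(J) = T^{*}\R^{d} \times \Sp(2d,\R)$ by Section~\ref{ssec:MomentumLevelSetAndReducedSpace}, this level set is invariant.

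The core is (iii), where I would argue that the reduction factorizes. Because the action is trivial on $T^{*}\R^{d}$ and $\tilde{\bf M}$ depends only on the $\mathcal{Z}_{d}$-factor, the coadjoint isotropy group is again $\Orth(2d)_{J} \cong \U(d)$ acting only on the second factor, so $\tilde{\bf M}^{-1}(J)/\Orth(2d)_{J} = T^{*}\R^{d} \times (\,{\bf M}^{-1}(J)/\Orth(2d)_{J}\,) = T^{*}\R^{d} \times \Sigma_{d}$ by Theorem~\ref{thm:Siegel-MaWe}. Since $\Omega$ is a direct sum and $\mathcal{I}_{J} = \id \times i_{J}$, $\Pi_{J} = \id \times \pi_{J}$, the defining relation $\mathcal{I}_{J}^{*}\Omega = \Pi_{J}^{*}\overline{\Omega}$ splits into the untouched $T^{*}\R^{d}$ piece plus $\frac{\hbar}{2}$ times the reduction computed in Section~\ref{ssec:ReducedSymplecticForm}; using $\overline{\Omega}_{J} = -\frac{1}{2}\Omega_{\Sigma_{d}} = \frac{1}{2}\d\mathcal{B}^{-1}_{jk} \wedge \d\mathcal{A}_{jk}$ gives \eqref{eq:overline_Omega}. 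The reduced Hamiltonian follows from $H \circ \mathcal{I}_{J} = \overline{H} \circ \Pi_{J}$ together with the computation of $\overline{H}_{J}$ in Proposition~\ref{prop:quadratic}(iii), carried out verbatim with $\nabla^{2}V(q)$ retained as a $q$-dependent parameter, producing \eqref{eq:overline_H}.

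Finally, (iv) is the reduced analogue of (i): computing $\ins{X_{\overline{H}}}\overline{\Omega} = \d\overline{H}$ in $(q, p, \mathcal{A}, \mathcal{B})$-coordinates, the $(\mathcal{A}, \mathcal{B})$-equations reproduce the Heller Riccati system as in Proposition~\ref{prop:quadratic}(iv) (the reduction factor cancels once more), while the $q$-dependence of $\overline{H}$ through $\frac{\hbar}{4}\tr(\mathcal{B}^{-1}\nabla^{2}V(q))$ yields the corrected $\dot{p}$ in \eqref{eq:Heller2}; statement (v) is then immediate, since for quadratic $V$ the constancy of $\nabla^{2}V$ kills the $q$-derivatives of the coupling terms and \eqref{eq:Hagedorn2}, \eqref{eq:Heller2} collapse to \eqref{eq:Hagedorn}, \eqref{eq:Heller}. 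I expect the only delicate point to be the bookkeeping in (iii)---confirming that the product symplectic form reduces factorwise and that the $q$--$(Q,P)$ coupling in $H$ does not obstruct the reduction, which it does not, precisely because $\tilde{\bf M}$ and the group action ignore the $T^{*}\R^{d}$-factor.
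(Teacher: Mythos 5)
Your proposal is correct and takes essentially the same route as the paper: the paper's own proof consists of the single observation that all assertions are straightforward generalizations of Proposition~\ref{prop:quadratic} proved in the same way, with assertion (v) following because $\nabla^{2}V$ is constant for quadratic $V$. Your write-up simply fills in the details of that argument---the product structure, the factorwise reduction, and the cancellation of the $\hbar/2$ factors---all of which check out.
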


\begin{proof}
  Most of the assertions are straightforward generalizations of those in Proposition~\ref{prop:quadratic} and so are proved in a similar way.
  The last assertion follows because the Hessian $\nabla^{2}V$ becomes constant when the potential $V$ is quadratic.
\end{proof}

\begin{remark}
  The above symplectic form $\overline{\Omega}$ and Hamiltonian $\overline{H}$ are exactly those that appeared in \citet{OhLe2013} (see $\overline{\Omega}_{\hbar}$ in (17) of Theorem~4.1 and also $\overline{H}_{1}$ in Section~7.1 there; see also \citet[Section~II.C]{Oh2015b}), but are derived from a different point of view.
  It was shown in \citet{OhLe2013} that the symplectic form $\overline{\Omega}$ in \eqref{eq:overline_Omega} is a pull-back of the symplectic form on the projective Hilbert space $\mathbb{P}(L^{2}(\R^{d}))$ induced by the symplectic form
  \begin{equation*}
    \Omega(\psi_{1}, \psi_{2}) = 2\hbar \Im\ip{\psi_{1}}{\psi_{2}}
  \end{equation*}
  on $L^{2}(\R^{d})$---the Schr\"odinger equation~\eqref{eq:Schroedinger} is written as a Hamiltonian system on $L^{2}(\R^{d})$ or $\mathbb{P}(L^{2}(\R^{d}))$ in terms of those symplectic forms; see, e.g., \citet[Chapter~2]{MaRa1999}.
  As for the Hamiltonian $\overline{H}$ in \eqref{eq:overline_H}, with a certain decay property assumed for the potential $V$, one can show that $\overline{H}$ is an $O(\hbar^{2})$ approximation to the expectation value of the Hamiltonian operator
  \begin{equation*}
    \hat{H} \defeq -\frac{\hbar^{2}}{2m} \Delta + V(x)
  \end{equation*}
  with respect to the Gaussian~\eqref{eq:psi_0}, i.e.,
  \begin{equation*}
    \ip{\psi_{0}}{\hat{H}\psi_{0}} = \overline{H} + O(\hbar^{2}),
  \end{equation*}
  where $\ip{\cdot}{\cdot}$ stands for the inner product on $L^{2}(\R^{d})$; see \citet[Proposition~7.1]{OhLe2013}.
\end{remark}

\subsection{Rotational Symmetry and Conservation Law}
The $\Orth(2d)$ symmetry exploited above is rather an intrinsic symmetry of the Hamiltonian system \eqref{eq:Hagedorn2}.
Here we would like to see what happens if the system has a symmetry in its physical configuration, particularly when the potential $V$ has a rotational symmetry.
This leads to the semiclassical angular momentum for the semiclassical system~\eqref{eq:Hagedorn2}, which corresponds to the one for the reduced system \eqref{eq:Heller2} found in \citet{Oh2015b}:
\begin{proposition}[Rotational symmetry and semiclassical angular momentum]
  Let $\varphi\colon \SO(d) \times \R^{d} \to \R^{d}$ be the natural action of the rotation group $\SO(d)$ on the configuration space $\R^{d}$, i.e., for any $R \in \SO(d)$,
  \begin{equation*}
    \varphi_{R}\colon \R^{d} \to \R^{d};
    \quad
    q \mapsto R q,
  \end{equation*}
  and suppose that the potential $V \in C^{3}(\R^{d})$ is invariant under the $\SO(d)$-action, i.e.,
  \begin{equation*}
    V \circ \varphi_{R} = V.
  \end{equation*}
  Also let $\Upsilon\colon \SO(d) \times (T^{*}\R^{d} \times \mathcal{Z}_{d}) \to T^{*}\R^{d} \times \mathcal{Z}_{d}$ be the $\SO(d)$-action on $T^{*}\R^{d} \times \mathcal{Z}_{d}$ defined as follows: For any $R \in \SO(d)$,
  \begin{equation*}
    \Upsilon_{R}\colon T^{*}\R^{d} \times \mathcal{Z}_{d} \to T^{*}\R^{d} \times \mathcal{Z}_{d};
    \qquad
    (q, p, Q, P) \mapsto (R q, R p, R Q, R P).
  \end{equation*}
  Then:
  \begin{enumerate}
    \renewcommand{\theenumi}{\roman{enumi}}
  \item $\Upsilon$ leaves the canonical one-form $\Theta$ invariant, and hence is a symplectic action with respect to the symplectic form~\eqref{eq:Omega}, i.e., $\Upsilon_{R}^{*} \Omega = \Omega$ for any $R \in \SO(d)$.
    \smallskip
  \item The Hamiltonian~\eqref{eq:H} is invariant under the action, i.e., $H \circ \Upsilon_{R} = H$ for any $R \in \SO(d)$.
    \smallskip
  \item The semiclassical system~\eqref{eq:Hagedorn2} conserves the {\em semiclassical angular momentum} ${\bf J}\colon T^{*}\R^{d} \times \mathcal{Z}_{d} \to \so(d)^{*}$ defined by
    \begin{equation}
      \label{eq:J-Q_P}
      {\bf J}(q, p, Q, P) = q \diamond p + \frac{\hbar}{2}(P_{1}Q_{1}^{T} + P_{2}Q_{2}^{T} - Q_{1}P_{1}^{T} - Q_{2}P_{2}^{T}),
    \end{equation}
    where $q \diamond p$ denotes the $d \times d$ matrix defined by $(q \diamond p)_{ij} \defeq q_{j}p_{i} - q_{i}p_{j}$; see, e.g., \citet[Remark~6.3.3 on p.~150]{Ho2011b}.
    The angular momentum map ${\bf J}$ is equivariant as well, i.e., for any $R \in \SO(d)$,
    \begin{equation*}
      {\bf J} \circ \Upsilon_{R} = \Ad_{R^{-1}}^{*} {\bf J}.
    \end{equation*}
  \end{enumerate}
\end{proposition}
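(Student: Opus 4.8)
The plan is to prove the three assertions in order, deriving (i) and (ii) from the orthogonality relation $R^{T}R = I_{d}$ together with the rotational invariance of $V$, and then obtaining (iii) from the same momentum-map formula already used for $\mathbf{M}$, since $\Upsilon$ will turn out to preserve the canonical one-form $\Theta$.

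For (i), note that $\Upsilon_{R}$ sends each of $q, p, Q_{1}, Q_{2}, P_{1}, P_{2}$ to its left translate by $R$. Because $\Theta$ in \eqref{eq:Theta} is assembled from $p_{j}\,\d{q^{j}} = \tr(p^{T}\d{q})$ and $\tr(P_{l}^{T}\d{Q_{l}})$, its pullback replaces these by $\tr\parentheses{(Rp)^{T}\d(Rq)} = \tr(p^{T}R^{T}R\,\d{q})$ and $\tr\parentheses{(RP_{l})^{T}\d(RQ_{l})} = \tr(P_{l}^{T}R^{T}R\,\d{Q_{l}})$; using $R^{T}R = I_{d}$ each term is unchanged, so $\Upsilon_{R}^{*}\Theta = \Theta$, and then $\Upsilon_{R}^{*}\Omega = -\d(\Upsilon_{R}^{*}\Theta) = \Omega$ by \eqref{eq:Omega}. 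For (ii), the terms $p^{2}/2m$ and $\tr(P^{*}P)$ are invariant by the same orthogonality (with $R^{*} = R^{T}$ since $R$ is real), and $V(q)$ is invariant by hypothesis. The only nonroutine point is the Hessian term: I would differentiate the identity $V(Rq) = V(q)$ twice to obtain the equivariance $R^{T}\nabla^{2}V(Rq)R = \nabla^{2}V(q)$, which gives $\tr\parentheses{(RQ)^{*}\nabla^{2}V(Rq)(RQ)} = \tr\parentheses{Q^{*}\,R^{T}\nabla^{2}V(Rq)R\,Q} = \tr\parentheses{Q^{*}\nabla^{2}V(q)Q}$, so $H \circ \Upsilon_{R} = H$.

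For (iii), since (i) shows that $\Upsilon$ preserves $\Theta$ and $\Omega = -\d\Theta$ is exact, the corresponding momentum map is characterized exactly as for $\mathbf{M}$ by $\ip{\mathbf{J}(\cdot)}{\xi} = \ins{\xi_{T^{*}\R^{d} \times \mathcal{Z}_{d}}}\Theta$ for $\xi \in \so(d)$, with infinitesimal generator $\xi_{T^{*}\R^{d} \times \mathcal{Z}_{d}}(q,p,Q,P) = (\xi q, \xi p, \xi Q, \xi P)$. Evaluating $\Theta$ on this vector yields $\ip{\mathbf{J}}{\xi} = p^{T}\xi q + \frac{\hbar}{2}\brackets{ \tr(P_{1}^{T}\xi Q_{1}) + \tr(P_{2}^{T}\xi Q_{2}) }$. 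Identifying $\so(d)^{*}$ with $\so(d)$ through the inner product $\ip{\mu}{\xi} = \frac{1}{2}\tr(\mu^{T}\xi)$, the first summand is exactly $\ip{q \diamond p}{\xi}$ for the stated convention $(q \diamond p)_{ij} = q_{j}p_{i} - q_{i}p_{j}$, while cyclicity of the trace together with $\xi^{T} = -\xi$ rewrites $\tr(P_{1}^{T}\xi Q_{1}) + \tr(P_{2}^{T}\xi Q_{2}) = \ip{\mu_{0}}{\xi}$ with $\mu_{0} = P_{1}Q_{1}^{T} + P_{2}Q_{2}^{T} - Q_{1}P_{1}^{T} - Q_{2}P_{2}^{T}$; collecting the $\hbar/2$ factor then reproduces \eqref{eq:J-Q_P}.

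Conservation of $\mathbf{J}$ along \eqref{eq:Hagedorn2} follows immediately from Noether's theorem given (i) and (ii), just as in the proof of Proposition~\ref{prop:quadratic}. Equivariance can be verified directly: under $\Upsilon_{R}$ one has $(Rq) \diamond (Rp) = R(q \diamond p)R^{T}$ and $\sum_{l}\brackets{ (RP_{l})(RQ_{l})^{T} - (RQ_{l})(RP_{l})^{T} } = R\mu_{0}R^{T}$, so $\mathbf{J} \circ \Upsilon_{R} = R\,\mathbf{J}\,R^{T} = \Ad_{R^{-1}}^{*}\mathbf{J}$ under the trace identification (equivalently, this is the general fact that a momentum map of a $\Theta$-preserving action is $\Ad^{*}$-equivariant). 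I expect the main obstacle to be purely in the bookkeeping of (iii)---keeping the pairing convention and the factor $\hbar/2$ consistent so that \eqref{eq:J-Q_P} emerges with the right normalization---together with the twice-differentiated Hessian identity in (ii), which is the one step not reducible to $R^{T}R = I_{d}$.
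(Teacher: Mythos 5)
Your proposal is correct and follows essentially the same route as the paper: direct verification that $\Upsilon_{R}^{*}\Theta = \Theta$ (hence symplecticity) and $H \circ \Upsilon_{R} = H$, then the standard momentum-map formula $\ip{{\bf J}(\cdot)}{\xi} = \ins{\xi_{T^{*}\R^{d} \times \mathcal{Z}_{d}}}\Theta$ with the trace identification of $\so(d)^{*}$, Noether's theorem for conservation, and the direct computation ${\bf J} \circ \Upsilon_{R} = R\,{\bf J}\,R^{T}$ for equivariance. Your explicit Hessian equivariance identity $R^{T}\nabla^{2}V(Rq)R = \nabla^{2}V(q)$ simply fills in a detail the paper subsumes under ``straightforward computations.''
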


\begin{proof}
  It is straightforward computations to see, for any $R \in \SO(d)$, that $\Upsilon_{R}^{*} \Theta = \Theta$ and so $\Upsilon_{R}^{*} \Omega = \Omega$ as well as that $H \circ \Upsilon_{R} = H$.
  For the symplecticity of $\Upsilon$, one may alternatively identify $T^{*}\R^{d} \times \mathcal{Z}_{d}$ with the cotangent bundle $T^{*}(\R^{d} \times \Mat_{d \times 2d}(\R))$ and then see that $\Upsilon$ is the cotangent lift of the action
  \begin{equation*}
    \SO(d) \times (\R^{d} \times \Mat_{d \times 2d}) \to (\R^{d} \times \Mat_{d \times 2d});
    \qquad
    (R, (q, Q_{1}, Q_{2})) \mapsto (R q, R Q_{1}, R Q_{2});
  \end{equation*}
  see, e.g., \citet[Proposition~6.3.2 on p.~170]{MaRa1999}.

  Let us find the corresponding momentum map, i.e., the angular momentum for the semiclassical system~\eqref{eq:Hagedorn2}.
  For any $\xi \in \so(d)$, its infinitesimal generator on $T^{*}\R^{d} \times \mathcal{Z}_{d}$ is given by
  \begin{align*}
    \xi_{T^{*}\R^{d} \times \mathcal{Z}_{d}}(q, p, Q, P)
    &\defeq \left. \od{}{\eps} \Upsilon_{\exp(\eps\xi)}(q, p, Q, P) \right|_{\eps=0}
    \\
    &= (\xi q)^{j} \pd{}{q^{j}}
    + (\xi p)_{j} \pd{}{p_{j}}
      + \sum_{l=1}^{2} \parentheses{
      (\xi Q_{l})^{jk} \pd{}{Q_{l}^{jk}}
      + (\xi P_{l})_{jk} \pd{}{P_{l,jk}}
    },
  \end{align*}
  where the indices $j$ and $k$ run from $1$ to $d$.
  Since $T^{*}\R^{d} \times \mathcal{Z}_{d}$ is an exact symplectic manifold with symplectic form $\Omega = -\d\Theta$ with $\Theta$ given in \eqref{eq:Theta}, and $\Upsilon$ leaves $\Theta$ invariant, the corresponding angular momentum ${\bf J}\colon T^{*}\R^{d} \times \mathcal{Z}_{d} \to \so(d)^{*}$ is given by
  \begin{equation*}
    \ip{ {\bf J}(q, p, Q, P) }{ \xi } = \Theta\parentheses{ \xi_{T^{*}\R^{d} \times \mathcal{Z}_{d}}(q, p, Q, P) }
    \quad\text{or}\quad
    \ip{ {\bf J}(\cdot) }{ \xi } = \ins{\xi_{T^{*}\R^{d} \times \mathcal{Z}_{d}}}\Theta,
  \end{equation*}
  and straightforward computations yield
  \begin{align*}
    \ip{ {\bf J}(q, p, Q, P) }{ \xi }
    &= p\cdot \xi q + \frac{\hbar}{2}\tr( P_{1}^{T} \xi Q_{1} + P_{2}^{T} \xi Q_{2} )
    \\
    &= \ip{ q \diamond p + \frac{\hbar}{2}(P_{1}Q_{1}^{T} + P_{2}Q_{2}^{T} - Q_{1}P_{1}^{T} - Q_{2}P_{2}^{T}) }{ \xi },
  \end{align*}
  where we identified $\so(d)^{*}$ with $\so(d)$ via the inner product as in \eqref{eq:inner_product-o2d}:
  \begin{equation*}
    \ip{\cdot}{\cdot}\colon \so(d) \times \so(d) \to \R;
    \qquad
    (\xi, \eta) \mapsto \ip{\xi}{\eta} \defeq \frac{1}{2}\tr(\xi^{T}\eta).
  \end{equation*}
  So we obtain \eqref{eq:J-Q_P}; it is a conserved quantity of the system~\eqref{eq:Hagedorn2} due to the $\SO(d)$-invariance of the Hamiltonian $H$ and Noether's Theorem (see, e.g., \citet[Theorem~11.4.1 on p.~372]{MaRa1999}).

  The equivariance of ${\bf J}$ is easy to show by direct calculations:
  \begin{equation*}
    {\bf J}(R q, R p, R Q, R P) = R\,{\bf J}(q, p, Q, P) R^{T}.
  \end{equation*}
  The equivariance also follows from the fact that $\Upsilon$ is a cotangent lift as mentioned earlier; see, e.g., \citet[Theorem~12.1.4 on p.~386]{MaRa1999}.
\end{proof}

Assuming that $(q, p, Q, P)$ is in the invariant manifold $\tilde{\bf M}^{-1}(J) = T^{*}\R^{d} \times \Sp(2d,\R)$, i.e., $\begin{tbmatrix}
  \Re Q & \Im Q \smallskip\\
  \Re P & \Im P
\end{tbmatrix} \in \Sp(2d,\R)$ or equivalently \eqref{eq:Q_P-Sp}, the setup and result in the above proposition descend from $\tilde{\bf M}^{-1}(J)$ to $T^{*}\R^{d} \times \Sigma_{d}$ and recover the semiclassical angular momentum found in \citet{Oh2015b}.
In fact, the action $\Upsilon$ induces an $\SO(d)$-action $\Gamma\colon \SO(d) \times (T^{*}\R^{d} \times \Sigma_{d}) \to T^{*}\R^{d} \times \Sigma_{d}$ so that the diagram
\begin{equation*}
  \begin{tikzcd}[column sep=10ex, row sep=7ex]
    T^{*}\R^{d} \times \Sp(2d,\R) \arrow{r}{} \arrow{d}[swap]{\Pi_{J}} \arrow{r}{\Upsilon_{R}|_{\tilde{\bf M}^{-1}(J)}} & T^{*}\R^{d} \times \Sp(2d,\R) \arrow{d}{\Pi_{J}}
    \\
    T^{*}\R^{d} \times \Sigma_{d} \arrow{r}[swap]{\Gamma_{R}} & T^{*}\R^{d} \times \Sigma_{d}
  \end{tikzcd}
\end{equation*}
commutes: For any $R \in \SO(d)$, we have
\begin{equation}
  \label{eq:Gamma}
  \Gamma_{R}\colon T^{*}\R^{d} \times \Sigma_{d} \to T^{*}\R^{d} \times \Sigma_{d};
  \qquad
  (q, p, \mathcal{A}, \mathcal{B}) \mapsto (R q, R p, R\mathcal{A}R^{T}, R\mathcal{B}R^{T}).
\end{equation}
This coincides with (26) in \citet{Oh2015b}; in fact $\Gamma$ is a symplectic action with respect to the symplectic form~\eqref{eq:overline_Omega}, i.e., $\Gamma_{R}^{*} \overline{\Omega} = \overline{\Omega}$ for any $R \in \SO(d)$.

Notice that
\begin{equation*}
  \Re(P Q^{*} - Q P^{*}) = P_{1}Q_{1}^{T} + P_{2}Q_{2}^{T} - Q_{1}P_{1}^{T} - Q_{2}P_{2}^{T}
\end{equation*}
with $Q = Q_{1} + {\rm i}Q_{2}$ and $P = P_{1} + {\rm i}P_{2}$ as defined above.
Then \eqref{eq:Q_P-Sp} and \eqref{eq:pi_Ud-2} give (recall that $\pi_{\U(d)} = \pi_{J}$)
\begin{equation*}
  \pi_{J}(Q, P) = P Q^{-1} = \mathcal{A} + {\rm i}\mathcal{B}
\end{equation*}
as well as $Q Q^{*} = \mathcal{B}^{-1}$ because (\citet[Lemma~V.1.1 on p.~124]{Lu2008})
\begin{align*}
  \mathcal{B} Q Q^{*} &= \Im( P Q^{-1} ) Q Q^{*} \\
                      &= \frac{1}{2{\rm i}}( P Q^{-1} - (Q^{*})^{-1} P^{*} ) Q Q^{*} \\
                      &= \frac{1}{2{\rm i}}( P Q^{*} - (Q^{*})^{-1} P^{*} Q Q^{*} ) \\
                      &= \frac{1}{2{\rm i}}[ P Q^{*} - (Q^{*})^{-1} (Q^{*} P - 2{\rm i}I_{d}) Q^{*} ]
                        = I_{d},
\end{align*}
where we used the second equality of \eqref{eq:Q_P-Sp}.
Therefore,
\begin{align*}
  P Q^{*} - Q P^{*}
  &= P Q^{-1} (Q Q^{*}) - Q (P^{*} Q) Q^{-1}
  \\
    &= P Q^{-1} (Q Q^{*}) - Q (Q^{*}P - 2{\rm i}I_{d}) Q^{-1}
  \\
  &= ( \mathcal{A} + {\rm i}\mathcal{B}) \mathcal{B}^{-1} - \mathcal{B}^{-1}( \mathcal{A} + {\rm i}\mathcal{B}) + 2{\rm i}I_{d}
  \\
  &= [\mathcal{A}, \mathcal{B}^{-1}] + 2{\rm i}I_{d},
\end{align*}
and hence we have ${\bf J}|_{T^{*}\R^{d} \times \Sp(2d,\R)} = \overline{\bf J} \circ \Pi_{J}$ with $\overline{\bf J}\colon T^{*}\R^{d} \times \Sigma_{d} \to \so(d)^{*}$ defined by
\begin{equation*}
  \overline{\bf J}(q, p, \mathcal{A}, \mathcal{B}) = q \diamond p - \frac{\hbar}{2}[\mathcal{B}^{-1}, \mathcal{A}].
\end{equation*}
This is exactly the semiclassical angular momentum for \eqref{eq:Heller2} derived via the action \eqref{eq:Gamma} in \citet[Theorem~3.1]{Oh2015b}.

The above semiclassical angular momentum is a natural one in the quantum mechanical sense as well, as mentioned in \citet[Section~III.B]{Oh2015b}.
In fact, set $d = 3$ and let $\hat{x}$ and $\hat{p}\defeq -{\rm i} \hbar \nabla$ be the position and angular operators; then one can show that, again assuming $(q, p, Q, P) \in \tilde{\bf M}^{-1}(J)$,
\begin{equation*}
  {\bf J}(q, p, Q, P) = \overline{\bf J}(q, p, \mathcal{A}, \mathcal{B}) = \ip{\psi_{0}}{(\hat{x} \times \hat{p}) \psi_{0}},
\end{equation*}
where $\psi_{0}$ is either \eqref{eq:psi_0} or \eqref{eq:psi_0-Hagedorn}.

\section*{Acknowledgments}
I would like to thank Melvin Leok for helpful discussions.
This work was partially supported by the AMS--Simons Travel Grant.

\bibliography{Siegel}
\bibliographystyle{plainnat}

\end{document}